\def\unnumbered#1{%
\begin{trivlist}
\item[] {{#1}\kern0.5em}}
\def\endunnumbered{\end{trivlist}}
\def\myappenv#1#2{\begin{unnumbered}{{\bf #1~\ref{#2}.}}}
\def\endmyappenv{\end{unnumbered}}
\newtheorem{theorem}{Theorem}
\newtheorem{lemma}{Lemma}
\newtheorem{corollary}[theorem]{Corollary}
\newtheorem{remark}{Remark}
\renewcommand{\subsubsection}[1]{\noindent \textbf{#1}}
\newcommand{\set}[1]{\{ #1 \}}
\newcommand{\eps}{\varepsilon}
\newcommand{\sig}[1]{{\cal #1}}    
\newcommand{\pair}[1]{\langle #1\rangle}
\newcommand{\ceil}[1]{\lceil #1\rceil}
\newcommand{\name}{\emph}
\newcommand{\floor}[1]{\lfloor #1\rfloor}
\newcommand{\alg}[1]{{\sf #1}}
\newcommand{\polylog}{\mathrm{polylog}}
\newcommand{\etal}{\textit{et al.}}
\newcommand{\mylog}{\log}
\newcommand{\MTree}[1]{{\mathcal{MT}_{#1}}}
\newcommand{\PTree}[1]{{\mathcal{PT}_{#1}}}
\newcommand{\parent}{\mathit{parent}}
\newcommand{\rsl}{\mathit{sl_r}}
\newcommand{\level}{\mathit{level}}
\newcommand{\LCA}{\mathit{LCA}}
\newcommand{\Str}{\mathit{str}}
\newcommand{\LCP}{\mathit{LCP}}
\newcommand{\ld}{\mathit{ld}}
\newcommand{\Suf}{\mathit{Suf}}
\newcommand{\Pred}{\mathsf{Pred}}
\newcommand{\Succ}{\mathsf{Succ}}
\newcommand{\LPath}{\mathsf{LPS}}
\newcommand{\Root}{\mathit{root}}
\newcommand{\Insert}{\mathsf{Insert}}
\newcommand{\Delete}{\mathsf{Delete}}
\def\newblock{\hskip .11em plus .33em minus .07em} 
\newcommand{\fnbf}[1]{{\frac{(\mylog\mylog {#1})^2}{\mylog\mylog\mylog {#1}}}}			
\def\fn(#1){\fnbf{#1}}			
\begin{document}

\title{%
 Packed Compact Tries: A Fast and Efficient Data Structure for Online String Processing
} %
\author{
  Takuya~Takagi$^1$\quad
  Shunsuke~Inenaga$^2$\quad
  Kunihiko~Sadakane$^3$\quad\\
  Hiroki~Arimura$^1$\\
  {$^1$ Graduate School of IST, Hokkaido University, Japan}\\
  {\texttt{\{tkg,arim\}@ist.hokudai.ac.jp}}\\
  {$^2$ Department of Informatics, Kyushu University, Japan}\\
  {\texttt{inenaga@inf.kyushu-u.ac.jp}}\\
  {$^3$ Graduate School of Information Sci. and Tech., University of Tokyo, Japan}\\
  {\texttt{sada@mist.i.u-tokyo.ac.jp}}
}

\date{}


\maketitle

\begin{abstract}
In this paper, we present a new data structure
called the \emph{packed compact trie} (\emph{packed c-trie})
which stores a set $S$ of $k$ strings of total length $n$ 
in $n \log\sigma + O(k \log n)$ bits of space 
and supports fast pattern matching queries and updates,
where $\sigma$ is the size of an alphabet. 
Assume that $\alpha = \log_\sigma n$ letters are packed in a single machine word
on the standard word RAM model,
and let $f(k,n)$ denote the query and update times of
the dynamic predecessor/successor data structure of our choice
which stores $k$ integers from universe $[1,n]$ in $O(k \log n)$ bits of space.
Then, given a string of length $m$,
our packed c-tries support
pattern matching queries and insert/delete operations
in $O(\frac{m}{\alpha} f(k,n))$ worst-case time and 
in $O(\frac{m}{\alpha} + f(k,n))$ expected time.
Our experiments show that our packed c-tries are 
faster than the standard compact tries (a.k.a. Patricia trees)
on real data sets.
As an application of our packed c-trie, 
we show that the sparse suffix tree for a string of length $n$
over prefix codes with $k$ sampled positions,
such as evenly-spaced and word delimited sparse suffix trees, 
can be constructed online in
$O((\frac{n}{\alpha} + k) f(k,n))$ worst-case time and
$O(\frac{n}{\alpha} + kf(k,n))$ expected time
with $n \log \sigma + O(k \log n)$ bits of space.
When $k = O(\frac{n}{\alpha})$, 
by using the state-of-the-art dynamic predecessor/successor data structures,
we obtain sub-linear time construction algorithms 
using only $O(\frac{n}{\alpha})$ bits of space in both cases. 
We also discuss an application of our packed c-tries to online 
LZD factorization. 

\end{abstract}


\section{Introduction}





The trie for a set $S$ of strings of total length $n$
is a classical data structure 
which occupies $O(n\log n + n\log \sigma)$ bits of space
and allows for prefix search and insertion/deletion for a given string
of length $m$ in $O(m \log \sigma)$ time,
where $\sigma$ is the alphabet size.
The \emph{compact trie} for $S$, a.k.a. 
\emph{Patricia tree}~\cite{Morrison:1968}, is 
a path-compressed trie where the edges in every non-branching path
are merged into a single edge.
By representing each edge label by a pair of positions
in a string in $S$,
the compact trie can be stored in $n \log \sigma + O(k \log n)$ bits of space,
where $k$ is the number of strings in $S$,
retaining the same time efficiency for prefix search and insertion/deletion
for a given string.
Thus, compact tries have widely been used in numerous applications 
such as dynamic dictionary matching~\cite{Hon:Lam:Shah:Vitter:ISAAC:2009}, 
suffix trees~\cite{Weiner}, 
sparse suffix trees~\cite{Karkkainen:Ukkonen:1996}, 
external string indexes~\cite{Ferragina:Grossi:JACM:1999},
and grammar-based text compression~\cite{GotoBIT15}.


In this paper, we show how to accelerate 
prefix search queries and update operations of compact tries 
on the standard word RAM model with machine word size $w = \log n$,
still keeping $n \log \sigma + O(k \log n)$-bit space usage.
%
%
%
A basic idea is to use the \emph{packed string matching} 
approach~\cite{Benkiki:Bille:etal:FSTTCS:2011},
where $\alpha = \log_\sigma n$ consecutive letters are packed 
in a single word and can be manipulated in $O(1)$ time.
In this setting, we can read a given pattern $P$ of length $m$ in $O(\frac{m}{\alpha})$ time,
but, during the traversal of $P$ over a compact trie,
there can be at most $m$ branching nodes.
Thus, a na\"ive implementation of a compact trie
takes $O(\frac{m}{\log_\sigma n} + m \log \sigma) = O(m \log \sigma)$ time
even in the packed matching setting.
To overcome the above difficulty,
we propose how to quickly process long non-branching paths
using bit manipulations, 
and how to quickly process dense branching subtrees
using fast predecessor/successor queries and dictionary look-ups.
As a result, 
we obtain a new fast compact trie called the \emph{packed compact trie}
(\emph{packed c-trie}) for a dynamic set $S$ of strings,
which achieves the following efficiency:


\begin{theorem}[main result] \label{theo:main_result}
Let $f(k, n)$ be the query and update time complexity 
of an arbitrary dynamic predecessor/successor data structure
which occupies $O(k \log n)$ bits of space
for a dynamic set of $k$ integers from the universe $[1,n]$.
Then, our packed c-trie stores 
a set $S$ of $k$ strings of total length $n$
in $n \log \sigma + O(k \log n)$ bits of space
and supports prefix search and
insertion/deletion for a given string of length $m$ in 
$O(\frac{m}{\alpha}f(k,n))$ worst-case time
or in $O(\frac{m}{\alpha} + f(k,n))$ expected time.
\end{theorem}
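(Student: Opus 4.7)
The plan is to build the packed c-trie by equipping a standard compact trie with packed string storage and a per-node acceleration structure, then to charge all traversal work to blocks of $\alpha = \log_\sigma n$ pattern characters. The $k$ input strings are concatenated into a packed array of $n\log\sigma$ bits, each edge label is stored as a pair of $O(\log n)$-bit indices into this array (giving $O(k\log n)$ bits in total since the compact trie has $O(k)$ edges), and at every internal node we keep a small auxiliary dictionary described below. This immediately yields the claimed space bound, so the remaining task is the time analysis.

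For the scan itself I would proceed in two alternating phases. In the non-branching phase, descending a single compact edge of length $\ell$ is handled by reading the corresponding $\lceil \ell/\alpha\rceil$ packed words of the edge label and of the remaining suffix of the pattern, and comparing them word by word via XOR together with a constant-time most-significant-mismatch operation; the first mismatching word, if any, reveals the longest common prefix in $O(1)$ additional time. Summed over the entire search path, the non-branching work is $O(m/\alpha)$ regardless of how the edges are distributed along the path.

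The main obstacle is the branching phase: between two consecutive pattern-block boundaries there may sit arbitrarily many branching nodes joined by very short edges, so a per-branching-node predecessor/successor query would give the prohibitive $O(m \cdot f(k,n))$ bound. To circumvent this, I would decompose the trie into dense chunks of string depth at most $\alpha$ and, at the root of every chunk, maintain a dictionary keyed by the at-most-$\alpha$-character label of the path from the chunk root to each of its descendant chunk roots. A single dictionary lookup, consuming exactly one packed word of the pattern, leaps past an entire chunk, so the total number of dictionary queries along any search path is $O(m/\alpha)$. Implementing each dictionary by the assumed predecessor/successor structure gives $O((m/\alpha)\,f(k,n))$ worst-case query time, while using a dynamic hash table instead gives $O(1)$ expected time per lookup and hence $O(m/\alpha)$ expected scan cost; the single residual $f(k,n)$ term in the expected bound comes from one final predecessor/successor query needed to pin down the exact child within the last chunk reached.

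For insertions and deletions I would first execute the prefix search above, then perform constant-time edge splits or node creations/removals at the unique divergence point, adjusting the packed-array pointers of the affected edges. Restoring the invariants of the dictionary and the predecessor/successor structure touches only $O(1)$ chunks — the one containing the split point and at most one chunk created or merged as a result — and therefore costs an additional $O(f(k,n))$ worst-case or $O(1)$ expected time on top of the search. The delicate point that the full proof must verify is that the chunk decomposition can be maintained under arbitrary updates without cascading rebuilds; this is the step I expect to be the main technical burden, and I would handle it by allowing chunks of any depth up to $\alpha$ and merging or splitting lazily, so that each update affects only a constant number of dictionaries and the amortized cost per operation stays within the bound of the statement.
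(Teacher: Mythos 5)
Your proposal follows essentially the same route as the paper: a decomposition of the compact trie into depth-$\alpha$ chunks (the paper's ``micro c-tries''), one predecessor/successor query per chunk for the worst-case bound, per-chunk hash tables plus a single residual predecessor query for the expected bound, and word-packed XOR comparisons with a most-significant-bit trick for long edges. The one worry you flag---maintaining the chunk decomposition under updates---is a non-issue in the paper's version because chunk boundaries are fixed at string depths that are multiples of $\alpha$, so an update only ever creates $O(1)$ auxiliary boundary nodes and never triggers cascading rebuilds.
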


If we employ Beame and Fich's data structure~\cite{Beame:Fich:JCSS:2002}
or Willard's y-fast trie~\cite{Willard:83} as the dynamic predecessor/successor
data structure, we obtain the following corollary:


\begin{corollary} \label{coro:beame_fich_willard}
There exists a packed c-trie for a dynamic set $S$ of strings 
which uses $n \log \sigma + O(k \log n)$ bits of space,
and supports prefix search and insert/delete operations for 
a given string of length $m$ in $O(\frac{m}{\alpha}\cdot\frac{\log \log k \log \log n }{\log\log\log n})$
worst-case time or in $O(\frac{m}{\alpha} + \log \log n)$ expected time.
\end{corollary}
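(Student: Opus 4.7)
The plan is to derive the corollary as a direct instantiation of Theorem~\ref{theo:main_result}, by selecting two different concrete dynamic predecessor/successor data structures to use for the parameter $f(k,n)$. No new algorithmic ideas are required: both halves of the claim follow by simple substitution once we fix the appropriate off-the-shelf data structures from the literature.

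First, for the worst-case bound, I would invoke a dynamic variant of Beame and Fich's predecessor/successor structure~\cite{Beame:Fich:JCSS:2002}, which stores $k$ integers from the universe $[1,n]$ in $O(k \log n)$ bits and answers predecessor/successor queries together with insertions and deletions in $f(k,n) = O(\frac{\log\log k \log\log n}{\log\log\log n})$ worst-case time. Plugging this value of $f(k,n)$ into the worst-case bound $O(\frac{m}{\alpha} f(k,n))$ of Theorem~\ref{theo:main_result} immediately yields the first bound claimed in the corollary.

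Second, for the expected-time bound, I would use Willard's y-fast trie~\cite{Willard:83}, which supports predecessor/successor queries and updates on $k$ integers from $[1,n]$ in $f(k,n) = O(\log\log n)$ expected time using $O(k \log n)$ bits of space. Substituting this $f(k,n)$ into the expected bound $O(\frac{m}{\alpha} + f(k,n))$ of Theorem~\ref{theo:main_result} yields the second claim. In both cases, the $O(k \log n)$-bit space used by the chosen predecessor/successor structure is already absorbed into the $O(k \log n)$ overhead term of Theorem~\ref{theo:main_result}, so the total space remains $n \log \sigma + O(k \log n)$ bits.

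The main point that requires care --- really the only thing resembling an obstacle here --- is verifying that fully dynamic variants of these two structures exist with precisely the stated worst-case and expected complexities and with $O(k \log n)$-bit space, so that they can legitimately be used for $f(k,n)$ in Theorem~\ref{theo:main_result}. Once this is checked against the cited references, the corollary follows mechanically by substitution.
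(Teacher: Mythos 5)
Your proposal matches the paper exactly: the corollary is obtained by instantiating Theorem~\ref{theo:main_result} with Beame--Fich's structure ($f(k,n)=O(\frac{\log\log k\,\log\log n}{\log\log\log n})$ worst-case) for the first bound and Willard's y-fast trie ($f(k,n)=O(\log\log n)$ expected) for the second, with the $O(k\log n)$-bit space absorbed as you note. The caveat you raise about verifying the dynamic variants against the cited references is the only substantive point, and the paper treats it the same way, by citation.
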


An interesting feature of our packed c-trie
is that unlike most other (compact) tries,
our packed c-trie does \emph{not} maintain a dictionary or a search structure
for the children of each node.
Instead, we partition our c-trie into $\ceil{h / \alpha}$ levels,
where $h$ is the length of the longest string in $S$.
Then each subtree of height $\alpha$, called a \emph{micro} c-trie,
maintains a predecessor/successor dictionary that processes
prefix search inside the micro c-trie.
A similar technique is used
in the \emph{linked dynamic trie}~\cite{JanssonSS15}, which is
an \emph{uncompact} trie for a dynamic set of strings.

Our experiments show that our packed c-tries are faster 
than Patricia trees for both construction and prefix search
in almost all data sets we tested.


			
We also show two applications to our packed c-tries.
The first application is online construction of 
\emph{evenly sparse suffix trees}~\cite{Karkkainen:Ukkonen:1996},
\emph{word suffix trees}~\cite{Inenaga:Takeda:CPM:2006} 
and its extension~\cite{Uemura:Arimura:CPM:2011}.
The existing algorithms for these sparse suffix trees take 
$O(n \log \sigma)$ worst-case time using $n \log \sigma + O(k \log n)$ bits of space,
where $k$ is the number of suffixes stored in the output sparse suffix tree.
Using our packed c-tries,
we achieve $O((\frac{n}{\alpha}+k) \frac{\log \log k \log \log n}{\log \log \log n})$ 
worst-case construction time
and $O(\frac{n}{\alpha} + k \log \log n)$ expected construction time.
The former is sublinear in $n$ when $k = O(\frac{n}{\alpha})$ and $\sigma = \polylog(n)$,
the latter is sublinear in $n$ when $k = o(\frac{n}{\log \log n})$ and $\sigma = \polylog(n)$.
To achieve these results,
we show that in our packed c-trie,
prefix searches and insertion operations
can be started not only from the root but from \emph{any} node.
This capability is necessary for online sparse suffix tree construction,
since during the suffix link traversal we have to insert new leaves
from non-root internal nodes.

The second application is 
online computation of the \emph{LZ-Double factorization}~\cite{GotoBIT15} 
(\emph{LZDF}), a state-of-the-art online grammar-based text compressor.
Goto et al.~\cite{GotoBIT15} presented a Patricia-tree based algorithm
which computes the LZDF of a given string $T$ of length $n$ in 
$O(k(M + \min\{k, M\} \log \sigma))$ worst-case time using $O(n \log \sigma)$
bits of space, where $k \leq n$ is the number of factors 
and $M \leq n$ is the length of the longest factor.
Using our packed c-tries,
we achieve a good expected performance with 
$O(k(\frac{M}{\alpha} + f(k, n)))$ time for LZDF.

All the proofs omitted due to lack of space can be found in Appendix.

\subsubsection{Related work.}
Belazzougui et al.~\cite{Belazzougui:SPIRE:2010} proposed 
a \emph{randomized} compact trie called 
the \emph{signed dynamic z-fast trie}, which stores
a dynamic set $S$ of $k$ strings in $n \log\sigma + O(k \log n)$ 
bits of space.
Given a string of length $m$,
the signed dynamic z-fast trie supports prefix search 
in $O(\frac{m}{\alpha} + \log m)$ worst-case time 
\emph{only with high probability},
and supports insert/delete operations 
in $O(\frac{m}{\alpha} + \log m)$ expected time
\emph{only with high probability}.\footnote{The $O(\log m)$ expected bound 
for insertion/deletion stated in~\cite{Belazzougui:SPIRE:2010} assumes that the prefix search for the string has already been performed.}
On the other hand, our packed c-trie always return the correct answer
for prefix search, and always insert/delete a given string correctly,
in the bounds stated in Theorem~\ref{theo:main_result} 
and Corollary~\ref{coro:beame_fich_willard}.

Andersson and Thorup~\cite{AnderssonT07}
proposed the \emph{exponential search tree}
which uses $n \log \sigma + O(k \log n)$ bits of space,
and supports prefix search and insert/delete operations 
in $O(m + \sqrt{\frac{\log k}{\log \log k}})$ worst-case time.
Each node $v$ of the exponential search tree stores 
a constant-time look-up dictionary for some children of $v$
and a dynamic predecessor/successor for the other children of $v$.
This implies that given a string of length $m$,
at most $m$ nodes in the search path for the string
must be processed one by one,
and hence packing $\alpha = \log_\sigma n$ letters in a single word
does not speed-up prefix searches or updates 
on the exponential search tree.

Fischer and Gawrychowski~\cite{FischerG2015} proposed 
the \emph{wexponential search tree},
which uses $n \log \sigma + O(k \log n)$ bits
of space, and supports prefix search and insert/delete operations
in $O(m + \frac{(\log \log \sigma)^2}{\log \log \log \sigma})$ 
worst-case time.
When $\sigma = \polylog(n)$,
our packed c-trie achieves the worst-case bound
$O(m\frac{\log \sigma \log \log k \log \log n}{\log n\log \log \log n})
= O(m\frac{(\log \log n)^2}{\log n\log \log \log n})
= O(o(1)m)$, 
whereas the wexponential search tree requires $O(m + \frac{(\log \log \log n)^2}{\log \log \log \log n})$ time\footnote{For sufficiently long patterns of length $m = \Theta(n)$, our packed c-trie achieves worst-case \emph{sublinear} $o(n)$ time while the wexponential search tree requires $O(n)$ time.}.

\section{Preliminaries}

Let $\Sigma$ be the alphabet of size $\sigma$.
An element of $\Sigma^*$ is called a string.
For any string $X$ of length $n$, $|X|$ denotes its length, namely $|X| = n$.
We denote the empty string by $\eps$.
For any $1 \leq i \leq n$, $X[i]$ denotes the $i$th character of $X$. 
For any $1 \le i \le j \le |X|$, $X[i,j]$ denotes the substring $X[i] \cdots X[j]$.
For convenience, $X[i,j] = \eps$ for $i > j$.
For any strings $X, Y$, $\LCP(X, Y)$ denotes 
the longest common prefix of $X$ and $Y$.


Throughout this paper, the base of the logarithms will be 2, unless otherwise stated. 
For any integers $i\le j$, $[i,j]$ denotes the interval $\set{i, i+1, \ldots, j}$. 
Our model of computation is the standard 
word RAM of word size $w = \mylog n$ bits.
For simplicity, we assume that $w$ is a multiple of $\log\sigma$,
so $\alpha = \log_\sigma n$ letters are packed in a single word.
Since we can read $w$ bits in constant time, 
we can read and process $\alpha$ consecutive letters in constant time. 

Let $S = \{X_1, \ldots, X_k\}$ be a set of $k$ non-empty strings
of total length $n$.
In this paper, we consider dynamic data structures for $S$
which allows us fast prefix searches of given patterns over strings in $S$,
and fast insertion/deletion of strings to/from $S$.


Suppose $S$ is prefix-free.
The \emph{trie} of $S$ is a tree such that each edge is labeled by a single letter,
the labels of the out-going edges of each node are distinct,
and there is a one-to-one correspondence between the strings in $S$
and the leaves, namely, 
for each $X_i \in S$ there exists a unique path from the root to a leaf
that spells out $X_i$.

The \emph{compact trie} $\sig T_S$ of $S$
is a path-compressed trie obtained by contracting
a non-branching path into a single edge.
Namely, in $\sig T_S$,
each edge is labeled by a non-empty substring of $T$,
each internal node has at least two children,
the out-going edges from each node begin with distinct letters,
and each edge label $x$ is encoded by a triple $\pair{i,a,b}$
such that $x = X_i[a,b]$ for some $1 \leq i \leq k$ and 
$1 \leq a \leq b \leq |X_i|$.
The \emph{length} of an edge $e$, denoted $|e|$, 
is the length of its label string.
Let $\Root(\sig T_S)$ denote the root of the compact trie $\sig T_S$.
For any node $v$, let $\parent(v)$ denotes its parent.
For convenience, let $\bot$ be an auxiliary node
such that $\parent(\Root(\sig T_S)) = \bot$.
We also assume that the edge from $\bot$ to $\Root(\sig T_S)$ is 
labeled by an arbitrary letter.
For any node $v$, let $\Str(v)$ denotes the string obtained by
concatenating the edge labels from the root to $v$.
We assume that each node $v$ stores $|\Str(v)|$.

Let $s$ be a prefix of any string in $S$.
Let $v$ be the shallowest node of $\sig T_S$ such that $s$ is a suffix of $\Str(v)$ (notice $s$ can be equal to $\Str(v)$), and let $u = \parent(v)$.
The \emph{locus} of string $s$ in the compact trie $\sig T_S$
is a pair $\phi = (e, h)$,
where $e$ is the edge from $u$ to $v$
and $h$~($1 \leq h \leq |e|$) is the offset from $u$, 
namely, $h = |s| - |\Str(u)|$.\footnote{In the literature the locus is represented by $(u, c, h)$ where $c$ is the first letter of the label of $e$. Since our packed c-trie does not maintain a search structure for branches, we represent the locus directly on $e$.}
We extend the $\Str$ function to locus $\phi$, 
so that $\Str(\phi) = s$.
The \emph{string depth} of locus $\phi$ is $d(\phi) = |\Str(\phi)|$.
We say that a string $P$ is \emph{recognized} by $\sig T_S$
iff there is a locus $\phi$ with $\Str(\phi) = P$.

Our input is a dynamic set of strings 
which allows for insertion and deletion of strings.
We thus consider the following query and operations on dynamic compact tries.

\begin{itemize}

\item $\LPath(\phi, P)$: 
Given a locus in $\sig T_S$ and a pattern string $P$,
it returns the locus $\hat{\phi}$ of string $\Str(\phi)Q$ in $\sig T_S$,
where $Q$ is the longest prefix of $P$ for which $\Str(\phi)Q$
is recognized by $\sig T_S$.
When $\phi = ((\bot, \Root(\sig T_S)), 1)$, 
then the query is known as the \emph{longest prefix search} for 
the pattern $P$ in the compact trie.

\item $\Insert(\phi, X)$: Given a locus $\phi$ in $\sig T_S$
and a string $X$,
it inserts a new leaf which corresponds to
a new string $\Str(\phi) X \in S$ into the compact trie,
from the given locus $\phi$.
When there is no node at the locus $\hat{\phi} = \LPath(\phi,X)$,
then a new node is created at $\hat{\phi}$ as the parent of the leaf.
When $\phi = ((\bot, \Root(\sig T_S)), 1)$, then this is standard
insertion of string $X$ to $\sig T_S$.

\item $\Delete(X_i)$: Given a string $X_i \in S$,
it deletes the leaf node $\ell_i$.
If the out-degree of the parent $v$ of $\ell_i$ becomes 1
after the deletion of $\ell_i$, then the in-coming and out-going 
edges of $v$ are merged into a single edge,
and $v$ is also deleted.






\end{itemize}



For a dynamic set $I \subseteq [1,n]$ of $k$ integers of $w = \mylog n$ bits each,
\name{dynamic predecessor data structures} 
(e.g., \cite{Beame:Fich:JCSS:2002,Belazzougui:SPIRE:2010,Willard:84}) 
efficiently support predecessor query
$\Pred(X) = \max(\{Y \in I \mid Y \leq X\} \cup \{0\})$,
successor query $\Succ(X) = \min(\{Y \in I \mid Y \leq X\} \cup \{n+1\})$,
and insert/delete operations for $I$.
Let $f(k, n)$ be the time complexity of 
for predecessor/successor queries and insert/delete operations
of an arbitrary dynamic predecessor/successor data structure
which occupies $O(k \log n)$ bits of space.
Beame and Fich's data structure~\cite{Beame:Fich:JCSS:2002} 
achieves $f(k, n) = O(\frac{(\log\log k)(\log \log n)}{\log \log \log n})$ 
worst-case time,
while Willard's Y-fast trie~\cite{Willard:83} achieves 
$f(k, n) = O(\log \log n)$ expected time.

\section{Packed dynamic compact tries}
\label{sec:algo}

In this section, we present our new dynamic compact tries
called the \emph{packed dynamic compact tries} 
(\emph{packed c-tries}) for a dynamic set $S = \{X_1, \ldots, X_k \}$ of 
$k$ strings of total length $n$,
which achieves the main result in Theorem~\ref{theo:main_result}.
In the sequel, a string $X \in \Sigma^*$ is called \emph{short} 
if $|X| \leq \alpha = \log_\sigma n$,
and is called \emph{long} if $|X| > \alpha$.


\subsection{Micro dynamic compact tries for short strings}
\label{subsec:small}

In this subsection, we present our data structure storing short strings.
Our input is a dynamic set $S = \{X_1, \ldots, X_k\}$ of 
$k$ strings of total length $n$,
such that $|X_i| \leq \alpha = \mylog_\sigma n$ for every $1 \leq i \leq k$.
Hence it holds that $k \leq \sigma^\alpha = n$.
For simplicity, we assume for now that $|X_i| = \alpha$
for every $1 \leq i \leq k$.
The general case where $S$ contains strings shorter than $\alpha$ will be 
explained later in Remark~\ref{rem:shorter_then_alpha}.

The dynamic data structure for short strings, 
called a \emph{micro c-trie} and denoted by $\MTree{S}$,  
consists of the following components:
\begin{itemize}

\item A dynamic compact trie of height exactly $\alpha$ storing the set $S$.
Let $\mathcal{N}$ be the set of internal nodes, 
and let $\mathcal{L} = \{\ell_1, \ldots, \ell_k\}$ be 
the set of $k$ leaves 
such that $\ell_i$ corresponds to $X_i$ for $1 \leq i \leq k$.
Since every internal node is branching, $|\mathcal{N}| \leq k-1$.
Every node $v$ of $\MTree{S}$ explicitly stores the string $\Str(v)$
using $\log n$ bits.
This implies that we can identify $v$ with $\Str(v)$.
Overall, this compact trie requires 
$n \log \sigma + O(k \log n)$ bits of space (including $S$).

\item A dynamic predecessor/successor data structure $\mathcal{D}$
which stores the set $S = \{X_1, \ldots, X_k\}$ of strings
in $O(k \log n)$ bits of space,
where each $X_i$ is regarded as a $\log n$-bit integer.
$\mathcal{D}$ supports predecessor/successor queries and
insert/delete operations in $f(k, n)$ time each.


\end{itemize}
It is evident that the micro c-trie requires 
$n \log \sigma + O(k \log n)$ bits of total space.



\begin{lemma}
\label{lem:LCA_micro_tree}
For any nodes $u$ and $v$ of the micro c-trie $\MTree{S}$,
we can compute the lowest common ancestor $\LCA(u, v)$ of $u$ and $v$
in $O(1)$ time.
\end{lemma}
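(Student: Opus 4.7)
The plan is to reduce LCA in $\MTree{S}$ to a constant number of packed bit operations, exploiting the fact that $|\Str(v)| \le \alpha = \log_\sigma n$ so each node's string fits in a single $\log n$-bit machine word. First I would establish the standard structural fact that $\Str(\LCA(u,v)) = \LCP(\Str(u), \Str(v))$: the root-to-$u$ and root-to-$v$ paths share a prefix up to the deepest common ancestor, and in a compact trie every branching point is an explicit node, so either one of $u,v$ is an ancestor of the other (and the shallower one is the LCA, whose string is clearly the LCP) or else the divergence point is an explicit internal node $w$ with $\Str(w)$ equal to $\LCP(\Str(u),\Str(v))$.

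Given this reduction, the LCP itself is computed in $O(1)$ by packed bit manipulation. XOR-ing the two packed strings $\Str(u)$ and $\Str(v)$ produces zero bits precisely where their characters agree, so the most significant set bit of $\Str(u) \oplus \Str(v)$ identifies the first position at which they differ. Using the standard constant-time word-RAM most-significant-bit primitive and dividing by $\log\sigma$ yields the LCP length $\ell$ (with the boundary case that the XOR is zero handled by noting one string is a prefix of the other); a bit mask on $\Str(u)$ then extracts the $\log n$-bit integer $s$ encoding the LCP itself.

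To return an actual node rather than just the integer $s$, I would augment $\MTree{S}$ with an auxiliary dictionary mapping each node's packed-string representation to its node pointer. Since $\MTree{S}$ contains at most $2k-1$ nodes and keys occupy $\log n$ bits, such a dictionary fits in $O(k \log n)$ bits, safely inside the space budget already declared for $\MTree{S}$; moreover, each insertion or deletion in $S$ creates or destroys only $O(1)$ trie nodes, so the dictionary is easily maintained online.

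The main obstacle I anticipate is this last step: realizing the dictionary so that the lookup $s \mapsto w$ really runs in $O(1)$. Generic dynamic hashing gives expected $O(1)$ immediately, but a worst-case $O(1)$ guarantee on $\log n$-bit keys requires a deterministic short-key dictionary. The other two steps are routine --- Step 1 is a classical compact-trie property and Step 2 is the standard packed-string XOR/MSB trick --- so essentially all of the design choice lives in how the auxiliary node dictionary is implemented.
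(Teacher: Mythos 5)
Your core argument is exactly the paper's: reduce $\LCA(u,v)$ to $\LCP(\Str(u),\Str(v))$ via the compact-trie property that every divergence point is an explicit branching node, then compute the LCP length in $O(1)$ by XOR-ing the two packed words and taking the most significant set bit (the paper cites Fredman--Willard for the msb and handles the ``one is a prefix of the other'' case by padding both strings to length $\alpha$ and splitting into the cases $z<|\Str(u)|$ and $z\ge|\Str(u)|$). The one place you diverge is your third step: the paper never introduces a dictionary from packed strings back to node pointers. Instead it declares that every node of $\MTree{S}$ explicitly stores $\Str(v)$ as a $\log n$-bit word and that a node is \emph{identified with} its string, so the $\log n$-bit integer encoding the LCP \emph{is} the answer; downstream uses (e.g.\ the $\LPath$ algorithm of Theorem~\ref{thm:small:opr:LCP}) only need to compare string depths of LCAs and form a locus, which works directly on these integer representatives. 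Under that convention your anticipated obstacle --- achieving worst-case $O(1)$ lookup in a dynamic dictionary on $\log n$-bit keys --- simply does not arise, which is a real advantage of the paper's formulation; your dictionary route is workable in expectation (hashing) but would indeed be the weak point if a deterministic worst-case bound were demanded, so it is worth adopting the paper's node-equals-string convention rather than grafting on a lookup structure.
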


\begin{proof}
We pad $\Str(u)$ and/or $\Str(v)$
with an arbitrary letter $c$ if necessary.
Namely, if $|\Str(u)| = \alpha$ then let $P = \Str(u)$, and
if $|\Str(u)| < \alpha$ then let $P = \Str(u) c \cdots c \in \Sigma^\alpha$. 
Similarly, 
if $|\Str(v)| = \alpha$ then let $Q = \Str(v)$, and
if $|\Str(v)| < \alpha$ then let $Q = \Str(v) c \cdots c \in \Sigma^\alpha$. 
We compute the most significant bit (msb) of the XOR of 
the bit representations of $P$ and $Q$.
Let $b$ the bit position of the msb,
and let $z = (b-1) / \log \sigma$.
W.l.o.g. assume $|\Str(u)| \leq |\Str(v)|$.
\begin{enumerate}
\item If $z < \Str(u)$, then 
$\Str(u)[1,z] = \LCP(\Str(u), \Str(v))$.
In this case, there exists a branching node $y$ such that $\Str(y) = \Str(u)[1,z]$,
and hence $\LCA(u, v) = y$.

\item If $z \geq \Str(u)$,
then $\Str(u) = \LCP(\Str(u), \Str(v))$,
and hence $u = \LCA(u, v)$.
\end{enumerate}

Since each of $P$ and $Q$ is stored in a single machine word,
we can compute the XOR of $P$ and $Q$ in $O(1)$ time.
The msb can be computed in $O(1)$ time using 
the technique of Fredman and Willard~\cite{FredmanW93}.
This completes the proof.
\end{proof}

\begin{theorem}
\label{thm:small:opr:LCP}
The micro c-trie $\MTree{S}$
supports $\LPath(\phi, X)$ queries in $O(f(k, n))$ time.
\end{theorem}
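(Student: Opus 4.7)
My plan is to answer $\LPath(\phi,X)$ with only a constant number of calls to $\mathcal{D}$ plus $O(1)$ word-RAM operations, exploiting the constant-time LCA of Lemma~\ref{lem:LCA_micro_tree} to avoid any tree walk.

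\emph{Finding the string depth of $\hat\phi$.} Pack $\Str(\phi)$ and $X$ into a single machine word $P$ of length $\alpha$, truncating if $|\Str(\phi)|+|X|>\alpha$ and otherwise padding with an arbitrary letter; let $r=\min(\alpha,|\Str(\phi)|+|X|)$ be the length of the ``live'' prefix. Query $X_p=\Pred_\mathcal{D}(P)$ and $X_s=\Succ_\mathcal{D}(P)$ in $O(f(k,n))$ time, and compute $h_p=\min(r,|\LCP(P,X_p)|)$ and $h_s=\min(r,|\LCP(P,X_s)|)$ in $O(1)$ each using the XOR-and-msb trick of Fredman and Willard. Set $h=\max(h_p,h_s)$. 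The order-monotonicity $|\LCP(A,C)|\le\min(|\LCP(A,B)|,|\LCP(B,C)|)$ for $A\le B\le C$ (applied with $B=P$) certifies that $h=\max_{X_i\in S}\min(r,|\LCP(P,X_i)|)$, which equals $|\Str(\hat\phi)|$; boundary cases where $\Pred$ or $\Succ$ is empty are handled by sentinels $-\infty$ and $+\infty$ in $\mathcal{D}$.

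\emph{Locating the edge.} Form the word-sized strings $P_{\min}=P[1..h]\cdot c_{\min}^{\alpha-h}$ and $P_{\max}=P[1..h]\cdot c_{\max}^{\alpha-h}$, where $c_{\min},c_{\max}$ are the smallest and largest letters of $\Sigma$, and issue $Y_1=\Succ_\mathcal{D}(P_{\min})$ and $Y_2=\Pred_\mathcal{D}(P_{\max})$ in another $O(f(k,n))$. Both exist and begin with $P[1..h]$, since at least one of $X_p,X_s$ already has that prefix and every string of $S$ with prefix $P[1..h]$ lies in $[P_{\min},P_{\max}]$. Following the pointers from $\mathcal{D}$'s records to the corresponding leaves, compute $v^{**}=\LCA(\ell_{Y_1},\ell_{Y_2})$ in $O(1)$ by Lemma~\ref{lem:LCA_micro_tree}. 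I then output $\hat\phi=v^{**}$ if $|\Str(v^{**})|=h$ and otherwise $\hat\phi=((\parent(v^{**}),v^{**}),\,h-|\Str(\parent(v^{**}))|)$.

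The obstacle I expect is certifying that $\parent(v^{**})$ and $v^{**}$ really straddle depth $h$, i.e., $|\Str(\parent(v^{**}))|<h\le|\Str(v^{**})|$. The upper bound is immediate since $\Str(v^{**})$ is the longest common prefix of $Y_1$ and $Y_2$, both of which begin with $P[1..h]$. For the lower bound, if $|\Str(\parent(v^{**}))|\ge h$ then $\Str(\parent(v^{**}))$ itself begins with $P[1..h]$, so every leaf under every child of the branching node $\parent(v^{**})$ has prefix $P[1..h]$; picking such a leaf in a sibling subtree of $v^{**}$ gives an element of $S$ in $[P_{\min},P_{\max}]$ but outside the subtree of $v^{**}$, contradicting $v^{**}=\LCA(\ell_{Y_1},\ell_{Y_2})$. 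The degenerate case $Y_1=Y_2$ collapses to $v^{**}=\ell_{Y_1}$ and the same straddling argument still applies. Since the four dictionary queries take $O(f(k,n))$ in total and every remaining step is $O(1)$ on $w$-bit words, the total query time is $O(f(k,n))$, as claimed.
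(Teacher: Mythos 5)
Your proof is correct and follows essentially the same two-step strategy as the paper: first determine the depth $h$ of $\hat\phi$ from $\Pred(P)$/$\Succ(P)$ via the XOR--msb trick, then recover the edge by querying the successor/predecessor of the padded range endpoints and taking an LCA via Lemma~\ref{lem:LCA_micro_tree}. The only (harmless) deviations are that you identify the origin node through a parent pointer with an explicit straddling argument, where the paper instead takes the lower of $\LCA(X_{L-1},X_L)$ and $\LCA(X_R,X_{R+1})$, and that you cap LCP lengths at $r$ instead of padding with special letters as in Remark~\ref{rem:shorter_then_alpha}.
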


\begin{proof}
Our algorithm for computing $\hat{\phi} = \LPath(\phi, X)$ consists of 
the two following steps:

First, we compute the string depth $d = d(\phi) \in [0,\alpha]$.
Let $P = \Str(\phi)X[1..\alpha-d(v)]$ be the prefix of $\Str(v)X$ of length $\alpha$.
Observe $d = \max\{|\LCP(P, \Pred(P))|, |\LCP(P, \Succ(P))|\}$.
Given $P$, we compute $\Pred(P)$ and $\Succ(P)$ in $O(f(k, n))$ time.
Then, $|\LCP(P, \Pred(P))|$ can be computed in 
$O(1)$ time by computing the msb of the XOR of 
the bit representations of $P$ and $\Pred(P)$, as in Lemma~\ref{lem:LCA_micro_tree}.
$|\LCP(P, \Succ(P))|$ can be computed analogously,
and thus, $d = d(\phi)$ can be computed in $O(f(k, n))$ time.

Second, we locate $e = (u, v)$.
See also Fig.~\ref{fig:small:case}. 
Let $Z = P[1, d]$.
Let $\mathit{LB} = Z c_1 \cdots c_1 \in \Sigma^\alpha$ 
and $\mathit{UB} = Z c_\sigma \cdots c_\sigma \in \Sigma^\alpha$ 
be the lexicographically least and greatest strings of length $\alpha$ with prefix $Z$,
respectively. 
To locate $u$ in $\MTree{S}$, we find 
the leftmost and rightmost leaves $X_L$ and $X_R$ below $\phi$ 
by $X_L = \Succ(\mathit{LB})$ and $X_R = \Pred(\mathit{UB})$. 
Then, the lower one of 
$\LCA(X_{L-1}, X_L)$ and 
$\LCA(X_R, X_{R+1})$ is the origin node $u$ of $e$.
The destination node $v$ is $\LCA(X_L, X_R)$.
These LCAs can be computed in $O(1)$ time by Lemma~\ref{lem:LCA_micro_tree}.
Finally we obtain 
$\phi = ((u, v), d - |\Str(u)|)$.
Overall, this step takes $O(f(k, n))$ time.
\end{proof}

\begin{figure}[t]
\begin{minipage}[t]{0.45\textwidth}
\begin{center}
\includegraphics[scale=0.45]{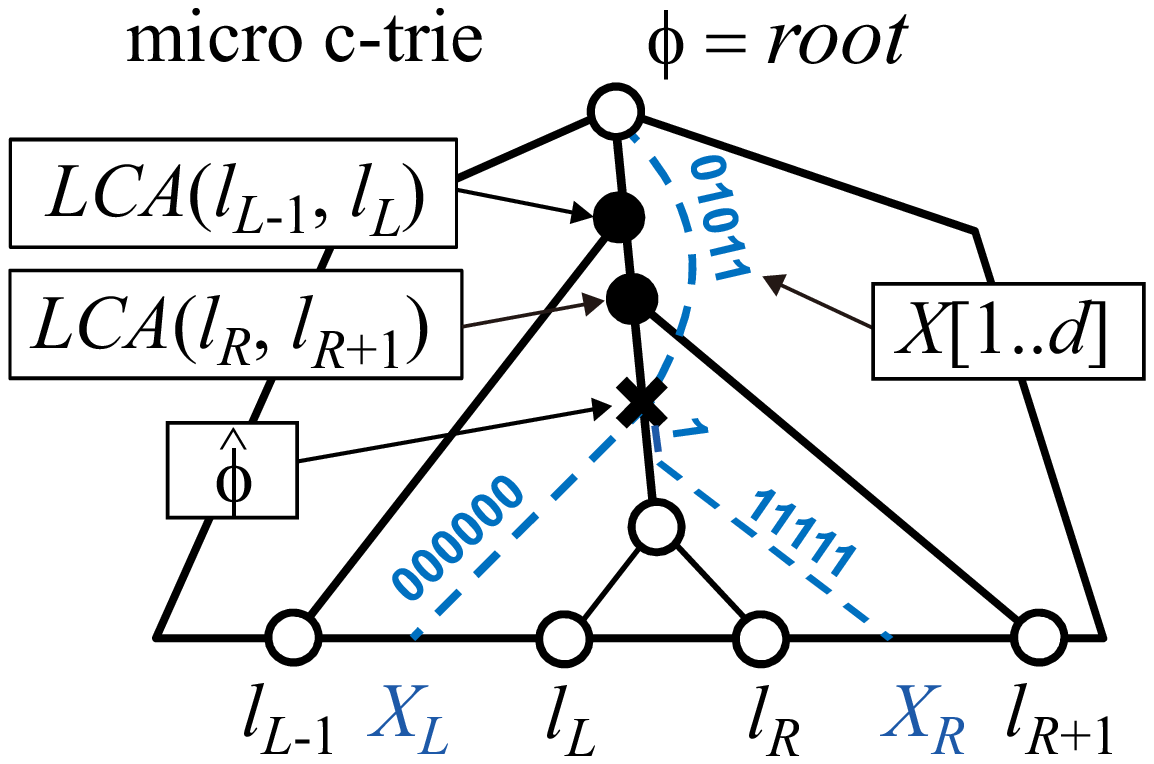}
\caption{
Given the initial locus $\phi$ (which is on the root in this figure)
and query pattern $P = 01011010110$,
the algorithm of Theorem~\ref{thm:small:opr:LCP} answers 
the $\LPath(\phi, P)$ query on the micro c-trie
as in this figure.
The answer to the query is the locus $\hat{\phi}$
for $P[1..5] = 01011$.
}
\label{fig:small:case}
\end{center}
\end{minipage}
\hfill
\begin{minipage}[t]{0.5\textwidth}
\begin{center}
\includegraphics[scale=0.4]{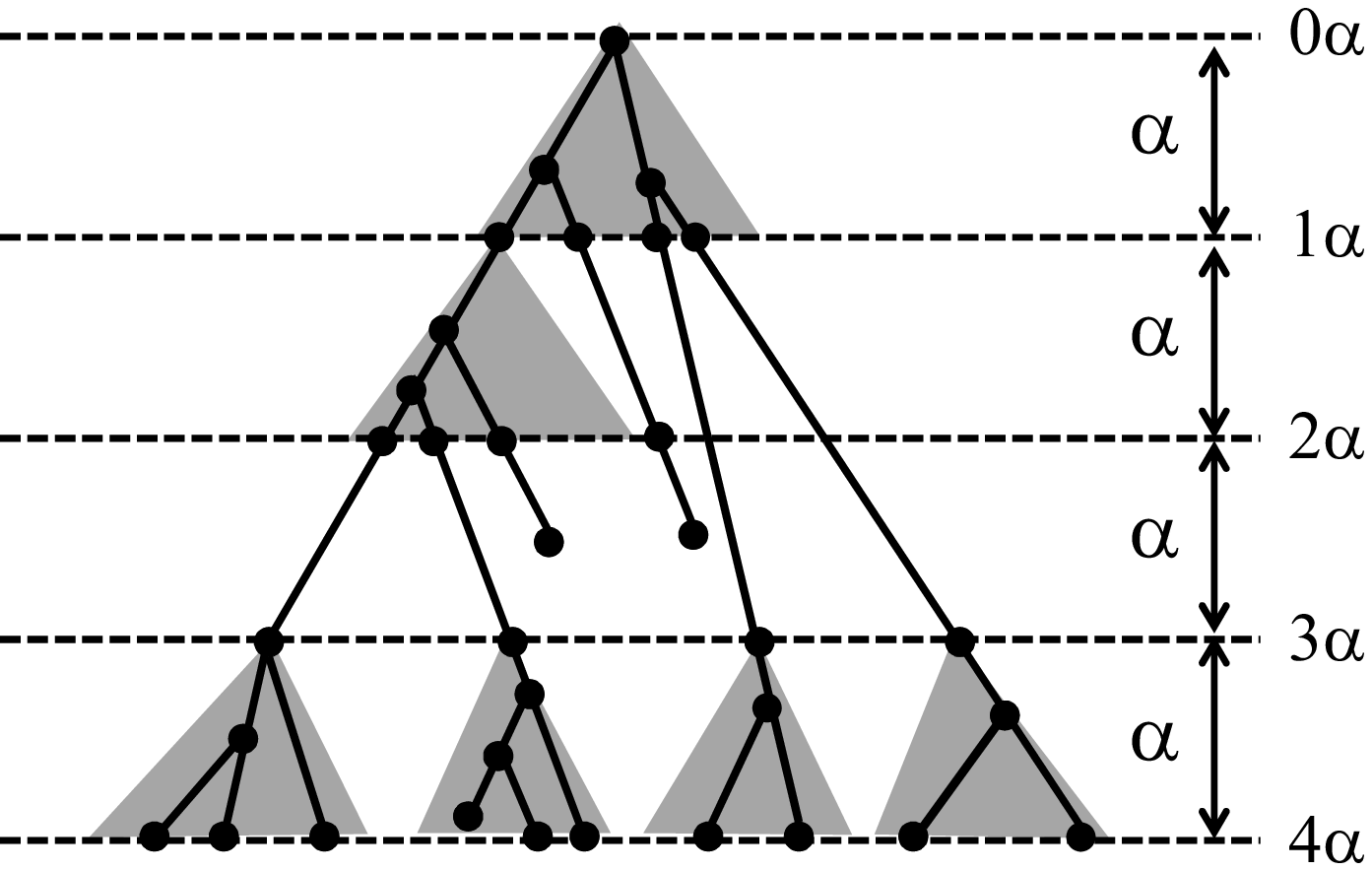}
\caption{Micro-trie decomposition:
The packed c-trie is decomposed into a number of 
micro c-tries (gray rectangles) each of which is 
of height $\alpha = \log_\sigma n$. Each micro-trie is equipped with a dynamic predecessor/successor data structure.}
\label{fig:large:case}
\end{center}
\end{minipage}
\end{figure}

Next, we explain how to support $\Insert(\phi, X)$ and $\Delete(X)$ operations.
\begin{lemma}
\label{lem:small:semidyn}
The micro c-trie $\MTree{S}$
supports $\Insert(\phi, X)$ and $\Delete(X)$ operations 
in $O(f(k, n))$ time.
We assume that $d(\phi) + |X| \le \alpha$ so that the height of 
the micro compact trie will always be kept within $\alpha$. 
\end{lemma}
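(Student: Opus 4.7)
The plan is to reduce each update on $\MTree{S}$ to (i) at most one $\LPath$ or $\Pred$/$\Succ$ call, costing $O(f(k,n))$ by Theorem~\ref{thm:small:opr:LCP} and the definition of $\mathcal{D}$, plus (ii) a constant amount of local topology surgery on the compact trie. Throughout, I will exploit the fact that every node $v$ already stores $\Str(v)$ in one machine word, since $|\Str(v)|\log\sigma \leq \alpha \log\sigma = \log n$; consequently edge labels are implicitly determined by endpoints and need never be materialized from a triple $\pair{j,a,b}$.

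For $\Insert(\phi, X)$, first compute $\hat{\phi} = \LPath(\phi, X)$ in $O(f(k,n))$ time. Write $\Str(\hat{\phi}) = \Str(\phi)Y$ where $Y$ is the longest prefix of $X$ recognized below $\phi$; by the hypothesis $d(\phi) + |X| \leq \alpha$ the new leaf will lie at depth $\leq \alpha$, so the height invariant is preserved. Two sub-cases arise. If $\hat{\phi}$ coincides with an existing node $v$, attach a new leaf $\ell$ to $v$ by a single fresh edge whose implicit label is $X[|Y|+1, |X|]$. Otherwise $\hat{\phi}$ lies in the interior of an edge $e = (u,v)$ at offset $h$; create a new branching node $w$ with $\Str(w) = \Str(\phi)Y$, reroute $u\to v$ through $w$, and attach a fresh leaf $\ell$ to $w$. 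Both sub-cases need only $O(1)$ pointer changes and one or two new $\Str$ values computed from the known $\Str(\phi)$, $X$, and $Y$. Finally, pad $\Str(\phi)X$ to length $\alpha$ with an arbitrary fixed letter, insert the resulting word as a new key into $\mathcal{D}$ in $O(f(k,n))$ time, and record a back-pointer from this key to $\ell$.

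For $\Delete(X_i)$, look up the padded representation of $X_i$ in $\mathcal{D}$ in $O(f(k,n))$ time to recover $\ell_i$ via the stored back-pointer. Detach $\ell_i$ from its parent $v$; if $v$ still has at least two children, nothing more is needed. Otherwise $v$ has out-degree exactly $1$, so contract $v$'s surviving out-edge together with its in-edge into a single edge between $\parent(v)$ and $v$'s remaining child, and discard $v$. Again the merged edge is implicit, so the trie update costs $O(1)$. Lastly, delete the padded key of $X_i$ from $\mathcal{D}$ in $O(f(k,n))$ time. I do not anticipate any real obstacle: correctness is the standard compact-trie insert/delete invariant, and the only point worth checking is that the combinatorial changes genuinely take $O(1)$ time — which follows from the observation above that each node's label fits in a machine word and from maintaining leaf-to-$\mathcal{D}$-key pointers so that $\ell_i$ does not have to be re-located by a separate traversal.
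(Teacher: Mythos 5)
Your $\Insert$ argument is essentially the paper's: one $\LPath$ query, an $O(1)$ edge split or leaf attachment, and one insertion into $\mathcal{D}$, for $O(f(k,n))$ total. (Two small remarks: the padding letter for short keys should be a \emph{special} symbol outside $\Sigma$, as in the paper's Remark~\ref{rem:shorter_then_alpha}, not an arbitrary alphabet letter, or padded keys can collide with genuine ones; and the base case $S=\emptyset$ is worth a sentence.)

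For $\Delete$ you take a genuinely different route. You declare edge labels implicit because every node stores $\Str(v)$ in a machine word, so deletion reduces to detaching $\ell_i$ and possibly contracting a unary parent. The paper instead treats edge labels as position references into the strings of $S$ (as in the triple encoding $\pair{i,a,b}$ from the Preliminaries), so deleting $X_i$ threatens to leave dangling references on the path from the root to $\parent(\ell_i)$. Its proof is built around resolving exactly this: each leaf carries an insertion rank; if $\ell_i$ is younger than some sibling $\ell_j$ then no label on that path references $X_i$ and the deletion is direct, while otherwise the contents of $X_i$ and $X_j$ are swapped (legitimate because siblings share the prefix $\Str(\parent(\ell_i))$) and the slot $X_j$ is discarded instead, with an amortized $O(1)$ periodic re-ranking. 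Your version is simpler and is consistent with the paper's own statement that micro c-trie nodes store $\Str(v)$ explicitly within the $O(k\log n)$-bit budget, so it stands as written for the standalone micro c-trie. What it buys is brevity; what it gives up is robustness of the representation: whenever labels must be kept as references into shared string storage (the general compact-trie convention, and the natural situation once micro c-tries are embedded in the packed c-trie over a common text), your deletion leaves stale references and the paper's rank/swap bookkeeping becomes the essential content of the lemma. You should at least state explicitly which label representation you are assuming.
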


\begin{proof}
We show how to support $\Insert(\phi, X)$ in $O(f(k, n))$ time.
Initially $S = \emptyset$, 
the micro compact trie $\MTree{S}$ consists only of $root(\MTree{S})$, 
and predecessor/successor dictionary $\sig D$ contains no elements. 
When the first string $X$ is inserted to $S$, 
then we create a leaf below the root and insert $X$
to $\sig D$.
Suppose that the data structure maintains a string set $S$ with $|S|\ge 1$. 
To insert a string $X$ from the given locus $\phi$,
we first conduct the $\LPath(\phi, X)$ query of Theorem~\ref{thm:small:opr:LCP},
and let $\hat{\phi} = (e, h)$ be the answer to the query.
If $h = |e|$, then we simply insert a new leaf $\ell$ from
the destination node of $e$.
Otherwise, we split $e$ at $\hat{\phi}$ and
create a new node $v$ there as the parent of the new leaf, 
such that $\Str(v) = \Str(\hat{\phi})$.
The rest is the same as in the former case.
After the new leaf is inserted,
we insert $\Str(\phi)X$ to $\sig D$ in $O(f(k, n))$ time. 

We can support $\Delete(X)$ as follows.
Let $\ell_i$ be the leaf representing $X_i = X \in S$.
If $\ell_i$ is a child of the root,
then we simply delete $\ell_i$.
Otherwise, we employ the following trick:
For each leaf $\ell$ in the micro c-trie,
we maintain the rank $r(\ell)$ such that $r(\ell) = t$ iff 
$\ell$ is the $t$-th inserted leaf to the micro c-trie.
Let $\ell_j$ be any sibling of $\ell_i$ with $j \neq i$.
If $r(\ell_i) > r(\ell_j)$, then no edge labels 
in the path $\mathcal{P}$ from the root to $\parent(\ell_i)$
refer to positions in $X_i$,
and hence we simply delete $\ell_i$ from the tree and $X_i$ from $S$.
If $r(\ell_i) < r(\ell_j)$, then some edge labels
in path $\mathcal{P}$ refer to positions in $X_i$.
The important observation is that, by the way we insert strings
using $\Insert$ queries above,
no edge labels in $\mathcal{P}$ refer to string $X_j$.
Now, we \emph{swap} the strings $X_i$ and $X_j$,
and delete $\ell_j$ from the trie and $X_j$ from $S$
(e.g, if $X_i = \mathtt{aabb}$ and $X_j = \mathtt{aaab}$,
then we swap them as $X_i = \mathtt{aaab}$ and $X_j = \mathtt{aabb}$,
and delete $X_j = \mathtt{aabb}$).
We can swap these strings in $O(1)$ time since they are of length $\alpha$.
When the rank value reaches $2n$ after the $2n$-th insertion,
then we re-label the ranks of all $k$ existing leaves from $1$ to $k$
in $O(n)$ time using a bucket sort.
Since $k \leq n$, the amortized cost for the re-labeling is constant.
Thus, the total time cost for $\Delete(X)$ is $O(f(k, n))$.
\end{proof}

\begin{remark}
\label{rem:shorter_then_alpha}
When $d(\phi) + |X| < \alpha$, then we can support 
$\Insert(\phi, X)$ and $\LPath(\phi, X)$ as follows.
When inserting $X$, we pad $X$ with a special letter $\$$
which does not appear in $S$.
Namely, we perform $\Insert(\phi, X)$ operation with
$X' = X \$^{\alpha-d(\phi)-|X|}$.
When computing $\LPath(\phi, X)$, 
we pad $X$ with another special letter $\# \neq \$$
which does not appear in $S$.
Namely, we perform $\LPath(\phi, X'')$ query
with $X' = X \#^{\alpha-d(\phi)-|X|}$.
This gives us the correct locus for $\LPath(\phi, X)$.
\end{remark}

\subsection{Packed dynamic compact tries for long strings}
\label{subsec:large}

In this subsection, 
we present the \emph{packed dynamic compact trie} (\emph{packed c-trie}) 
$\PTree{S}$ for a set $S$ of variable-length strings of length at most $O(2^w) = O(n)$.


\subsubsection{Micro trie decomposition.}
We decompose $\PTree{S}$ into a number of micro c-tries:
Let $h > \alpha$ be the length of the longest string in $S$.
We categorize the nodes of $\PTree{S}$ into $\ceil{h / \alpha} + 1$ 
levels:
We say that a node of $\PTree{S}$ is at level 
$i$~($0 \leq i \leq \ceil{h / \alpha}$)
iff $|\Str(v)| \in [i\alpha, (i+1)\alpha-1]$.
The level of a node $v$ is denoted by $\level(v)$.
A locus $\phi$ of $\PTree{S}$ is called a \emph{boundary}
iff $d(\phi)$ is a multiple of $\alpha$. 
Consider any path from $\Root(\PTree{S})$ to a leaf,
and assume that there is no node at some boundary $k \alpha$ on this path.
We create an auxiliary node at that boundary on this path, 
iff there is at least one non-auxiliary (i.e., original)
node at level $i-1$ or $i+1$ on this path.
Let $\mathcal{BN}$ denote the set of nodes at the boundaries,
called the \emph{boundary nodes}.
For each boundary node $v \in \mathcal{BN}$,
we create a micro compact trie $\sig{MT}$ 
whose root $\Root(\sig{MT})$ is $v$,
internal nodes are all descendants $u$ of $v$ with $\level(u) = \level(v)$,
and leaves are all boundary descendants $\ell$ of $v$ 
with $\level(\ell) = \level(v) + 1$.
Notice that each boundary node 
is the root of a micro c-trie at its level
and is also a leaf of a micro c-trie at the previous level.

\begin{lemma}
\label{lem:space:large:case}
The packed c-trie $\PTree{S}$ for a prefix-free set $S$ of $k$ 
strings requires $n \log \sigma + O(k \log n)$ bits of space.
\end{lemma}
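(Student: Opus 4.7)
The plan is to split the space into three accounts: (i) the $n \log \sigma$ bits needed to store the $k$ strings of $S$ themselves, into which all edge labels point; (ii) $O(k \log n)$ bits for the tree topology (parent/child pointers, the stored depths $|\Str(v)|$, and edge-label triples $\langle i,a,b\rangle$); and (iii) $O(k \log n)$ bits for the per-micro-c-trie predecessor/successor dictionaries constructed in Section~\ref{subsec:small}. Both (ii) and (iii) scale with the number of nodes of $\PTree{S}$, so the crux of the proof is to show that the total number of nodes in $\PTree{S}$ (original plus auxiliary boundary nodes) is $O(k)$.

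Since $S$ is prefix-free, the underlying compact trie prior to micro-trie decomposition has at most $k$ leaves and at most $k-1$ branching internal nodes, hence at most $2k-1$ original edges. I would then argue that every original edge $e = (u,v)$ contributes at most $O(1)$ auxiliary boundary nodes. Let $i_u = \level(u)$ and $i_v = \level(v)$; the boundaries $j\alpha$ strictly inside $e$ satisfy $j \in [i_u+1, i_v]$. Because $e$ is an edge of the original compact trie, its interior holds no original nodes, so on any root-to-leaf path through $e$ the only original nodes at a level exactly equal to $j-1$ or $j+1$ must be $u$, $v$, or an immediate descendant of $v$ at level $i_v+1$: ancestors of $u$ have level $\le i_u$, and proper descendants of $v$ have level $\ge i_v$. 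Consequently the creation rule fires only at $j = i_u+1$ (witnessed by $u$) or $j \in \{i_v-1, i_v\}$ (witnessed by $v$ or one of its immediate descendants). Interior boundaries of long edges are therefore skipped, $e$ contributes only $O(1)$ auxiliary nodes, and $\PTree{S}$ has $O(k)$ nodes in total.

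With the $O(k)$-node bound in hand, the accounting is routine. Each node needs $O(\log n)$ bits for its stored depth and pointers, summing to $O(k \log n)$ bits; each of the $O(k)$ edge labels is a triple of $O(\log n)$-bit positions that index into the shared string block of $S$, costing another $O(k \log n)$ bits and the $n \log \sigma$ term. For the dictionaries, every micro c-trie $\mathcal{MT}$ rooted at a boundary node contributes $O(k_{\mathcal{MT}} \log n)$ bits, where $k_{\mathcal{MT}}$ is its number of leaves; each original leaf of $\PTree{S}$ is a leaf of exactly one micro c-trie, and each boundary node is a leaf of exactly one micro c-trie (the one at its parent level), so $\sum_{\mathcal{MT}} k_{\mathcal{MT}} = O(k)$ and all the dictionaries together fit in $O(k \log n)$ bits.

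The main obstacle is the node-count step: rigorously ruling out a pathological original edge of length $\Theta(n)$ that would naively spawn $\Theta(n/\alpha)$ auxiliary nodes. The ``only if there is an original node at level $i-1$ or $i+1$ on this path'' clause in the definition of the decomposition is exactly what prevents this blow-up, but one has to argue carefully that, on a root-to-leaf path crossing an interior boundary of $(u,v)$, the only candidates for such a level-adjacent witness are $u$, $v$, or an immediate descendant of $v$. Once this structural observation is in place, the remaining bookkeeping (pointers, triples, and summing dictionary sizes) is straightforward.
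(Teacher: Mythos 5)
Your proof is correct and follows essentially the same route as the paper's: bound the original nodes by $O(k)$, show that each original edge spawns only $O(1)$ auxiliary boundary nodes thanks to the adjacent-level creation rule, and then charge $O(\log n)$ bits per node and per dictionary entry on top of the shared $n\log\sigma$ bits for the string data. You in fact supply more detail than the paper on the crucial step --- the paper merely asserts that at most two auxiliary boundary nodes arise per original edge, whereas your level-by-level analysis of which boundaries can have a level-adjacent witness (yielding $O(1)$, up to three, candidates per edge) is precisely the justification it omits.
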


\begin{proof}
Firstly, we show the number of auxiliary boundary nodes
in $\PTree{S}$.
At most 2 auxiliary boundary nodes are created on 
each \emph{original} edge of $\PTree{S}$.
Since there are at most $2k-2$ original edges, 
the total number of auxiliary boundary nodes is at most $4k-4$.

Since there are at most $2k-1$ original nodes in $\PTree{S}$,
the total number of all nodes in $\PTree{S}$ is bounded by $6k-5$.
Clearly, the total number of \emph{short} strings of length at most $\alpha$ 
maintained by the micro c-tries is bounded by the number of all nodes
in $\PTree{S}$, which is $6k - 5$.
Hence, the total space of the packed c-trie $\PTree{S}$ is $n \log \sigma + O(k \log n)$ bits.
\end{proof}

For any locus $\phi$ on the packed c-trie $\PTree{S}$,
$\ld(\phi)$ denotes the local string depth of $\phi$
in the micro c-trie $\sig{MT}$ that contains $\phi$.
Namely, if $\Root(\sig{MT}) = v$, 
the parent of $u$ in $\PTree{S}$ is $u$,
and $e = (u, v)$, then $\ld(\phi) = d(\phi) - d((e, |e|))$.
Prefix search queries
and insert/delete operations can be efficiently 
supported by our packed c-trie, as follows.




\begin{lemma}
\label{lem:large:LCP}
The packed c-trie $\PTree{S}$
supports $\LPath(\phi, P)$ query in $O(\frac{m}{\alpha} f(k))$
worst-case time, where $m = |P| > \alpha$.
\end{lemma}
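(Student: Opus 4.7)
The plan is to traverse the micro c-trie decomposition of $\PTree{S}$ from top to bottom, spending $O(f(k,n))$ time per micro c-trie visited and consuming roughly $\alpha$ characters of $P$ at each step. Since each iteration (after the first) advances the current locus down by exactly $\alpha$ characters along the search path, the total number of visited micro c-tries is $O(m/\alpha)$, which yields the claimed worst-case bound.

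Concretely, let $\sig{MT}_0$ be the micro c-trie containing the initial locus $\phi$. First I would consume only as many characters of $P$ as are needed to reach the next level boundary: set $\ell_0 = \alpha - \ld(\phi)$ and $X_0 = P[1, \ell_0]$, then invoke $\LPath(\phi, X_0)$ on $\sig{MT}_0$ via Theorem~\ref{thm:small:opr:LCP} in $O(f(k,n))$ time. If this call returns a locus $\hat{\phi}_0$ strictly inside $\sig{MT}_0$, then $\hat{\phi}_0$ is the final answer and we stop: no further character of $P$ can be matched inside $\PTree{S}$, because descent into a deeper micro c-trie is possible only through a boundary leaf of $\sig{MT}_0$, which was not reached.

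Otherwise, $\hat{\phi}_0$ lies on a boundary leaf $v_1$ of $\sig{MT}_0$, which coincides with the root of the next micro c-trie $\sig{MT}_1$. Advance the pattern position to $j_1 = \ell_0 + 1$, set $X_1 = P[j_1, \min(j_1+\alpha-1, m)]$, and invoke $\LPath(((\bot, v_1), 1), X_1)$ on $\sig{MT}_1$. Iterate: at each step $i \ge 1$, either the call halts strictly inside $\sig{MT}_i$ (return that locus) or it reaches a boundary leaf, in which case we descend to $\sig{MT}_{i+1}$ after advancing the pattern pointer by exactly $\alpha$. Termination occurs either when the query halts inside some micro c-trie or when the remaining suffix of $P$ is shorter than $\alpha$, in which case the final micro-level $\LPath$ call uses padding as in Remark~\ref{rem:shorter_then_alpha}.

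Each iteration costs $O(f(k,n))$ by Theorem~\ref{thm:small:opr:LCP}, and there are at most $\lceil m/\alpha \rceil + O(1)$ iterations, giving total time $O(\frac{m}{\alpha} f(k,n))$. The main obstacle is verifying correctness at the interfaces between consecutive micro c-tries: one must check that halting strictly inside some $\sig{MT}_i$ is truly equivalent to halting in $\PTree{S}$, which follows from the fact that escaping a micro c-trie can only occur via a boundary node, together with the observation that boundary leaves of $\sig{MT}_i$ and the root of $\sig{MT}_{i+1}$ represent the same node of $\PTree{S}$, so no character is double-counted or skipped at the interface. A secondary subtlety is the treatment of the first unaligned chunk (length $\ell_0 < \alpha$) and the possibly short final chunk, both handled cleanly by the padding trick of Remark~\ref{rem:shorter_then_alpha}.
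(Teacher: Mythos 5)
Your proposal is correct and matches the paper's own argument: both factorize $P$ into an initial chunk of length $\alpha-\ld(\phi)$ followed by $\alpha$-length blocks aligned to the micro-trie boundaries, issue one $\LPath$ query per micro c-trie via Theorem~\ref{thm:small:opr:LCP} at $O(f(k,n))$ each, and sum over the $O(m/\alpha)$ blocks. Your additional remarks on the boundary-leaf/root identification and on padding the last chunk are exactly the (implicit) correctness details behind the paper's terser proof.
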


\begin{proof}
If $m + \ld(\phi) \leq \alpha$, then
the bound immediately follows from Theorem~\ref{thm:small:opr:LCP}.
Now assume $m + \ld(\phi) > \alpha$,
and let $q = \alpha - \ld(\phi)+1$.
We factorize $P$ into $h+1$ blocks as 
$p_0 = P[1,q-1]$, 
$p_1 = P[q,q+\alpha-1]$, \ldots, 
$p_{h-1} = P[q+(h-1)\alpha, q+h\alpha-1]$, and
$p_{h} = P[q+h\alpha, m]$,
where $1 \leq |p_0| \leq \alpha$,
$|p_i| = \alpha$ for $1 \leq i \leq h-1$,
and $1 \leq |p_{h}| \leq \alpha$.
Note that each block can be computed in $O(1)$ time by standard bit operations.
If there is a mismatch in $p_0$, we are done.
Otherwise, 
for each $i$ in increasing order from $1$ to $h$,
we conduct $\LPath(\gamma, p_i)$ query from 
the root $\gamma$ of the corresponding micro c-trie at each level
of the corresponding path starting from $\phi$.
This continues until either we find the first mismatch for some $i$,
or we find complete matches for all $i$'s.
Each $\LPath$ query with each micro c-trie
takes $O(f(k, n))$ time by Theorem~\ref{thm:small:opr:LCP}.
Since $h = O(\frac{m}{\alpha})$, it takes a total of $O(\frac{m}{\alpha}f(k, n))$ time.
\end{proof}

\begin{lemma}
The packed c-trie $\PTree{S}$
supports $\Insert(\phi, X)$ and $\Delete(X)$ operations
in $O(\frac{m}{\alpha} f(k, n))$ worst-case time,
where $m = |X| > \alpha$.
\end{lemma}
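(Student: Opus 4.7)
The plan is to reduce both operations to a sequence of micro c-trie updates, each costing $O(f(k,n))$ by Lemma~\ref{lem:small:semidyn}, with the number of such updates bounded by $O(m/\alpha)$. The backbone of the argument mirrors the $\LPath$ analysis in Lemma~\ref{lem:large:LCP}, since in both cases we process $P$ (or $X$) block by block across consecutive levels of the micro-trie decomposition.

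For $\Insert(\phi, X)$, I would first call $\LPath(\phi, X)$ and obtain its answer $\hat\phi = (e, h)$ in $O(\frac{m}{\alpha} f(k,n))$ worst-case time by Lemma~\ref{lem:large:LCP}. Write $e = (u, v)$ and let $Y = X[\,d(\hat\phi) - d(\phi) + 1,\, m\,]$ be the unmatched suffix that must be hung below $\hat\phi$. If $h < |e|$, I split $e$ at $\hat\phi$ by creating a new branching node $w$ in the micro c-trie $\sig{MT}$ that contains $\hat\phi$; this is a single $\Insert$ into $\sig{MT}$ at local depth $\ld(\hat\phi)$, of cost $O(f(k,n))$. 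Then I attach a new leaf $\ell$ for $\Str(\hat\phi)Y$ below $w$ (or below $v$ if $h = |e|$). The new edge from $w$ (or $v$) down to $\ell$ crosses at most $\lceil |Y|/\alpha\rceil = O(m/\alpha)$ boundaries, and at every crossing I create an auxiliary boundary node, spawn a fresh micro c-trie rooted there, and insert that boundary node as a leaf of the previous level's micro c-trie (using the padding convention of Remark~\ref{rem:shorter_then_alpha} when the final block is short). Each such micro-level update is $O(f(k,n))$, giving $O(\frac{m}{\alpha} f(k,n))$ in total.

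For $\Delete(X)$, I would first locate the leaf $\ell_i$ representing $X$ by invoking $\LPath$ from the root with pattern $X$, again in $O(\frac{m}{\alpha} f(k,n))$ time. Let $u = \parent(\ell_i)$. I delete $\ell_i$ from its containing micro c-trie by Lemma~\ref{lem:small:semidyn}. If $u$ still has at least two children we are done; otherwise let $v$ be the unique remaining child. I merge the edges $(\parent(u), u)$ and $(u, v)$ into one and delete $u$ from its micro c-trie. The merge may orphan the auxiliary boundary nodes lying on these two edges, each of which must then be removed from the predecessor/successor dictionary of its micro c-trie; since only $O(1)$ auxiliary nodes sit on a pair of adjacent edges, this cleanup adds only $O(f(k,n))$ extra. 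Hence $\Delete(X)$ also runs in $O(\frac{m}{\alpha} f(k,n))$ time.

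The main obstacle, and the only place where care is needed, is the bookkeeping around auxiliary boundary nodes: when a new leaf chain is grown (resp.\ a non-branching parent is contracted) the set of boundary nodes on the affected spine changes, and each addition or removal must be reflected in exactly one micro c-trie's dictionary. Once one checks that (i) the argument of Lemma~\ref{lem:space:large:case} still bounds the number of auxiliary nodes touched by a single update by $O(m/\alpha)$, and (ii) the padding trick of Remark~\ref{rem:shorter_then_alpha} lets boundary nodes at non-multiple-of-$\alpha$ depths coexist with full-length leaves in the same dictionary, the time bound follows by summing $O(f(k,n))$ over $O(m/\alpha)$ micro-level updates.
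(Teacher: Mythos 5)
Your overall reduction is the paper's: answer $\LPath(\phi,X)$ in $O(\frac{m}{\alpha}f(k,n))$ time, perform one micro c-trie $\Insert$ at the mismatch block via Lemma~\ref{lem:small:semidyn}, hang the unmatched tail below it, and for $\Delete$ charge $O(f(k,n))$ to each of the $O(m/\alpha)$ micro c-tries on the leaf's path. The stated time bounds do follow from your account.

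There is, however, one concrete error in your insertion step: you create an auxiliary boundary node, a fresh micro c-trie, and a new dictionary entry at \emph{every} $\alpha$-boundary crossed by the new leaf edge, i.e.\ $\Theta(m/\alpha)$ of them per insertion. The data structure does not do this. By the decomposition rule, an auxiliary node is placed at a boundary only if an original node lies at an adjacent level on that path, which is exactly what makes the count of auxiliary nodes $O(k)$ in Lemma~\ref{lem:space:large:case} (at most two per original edge). Accordingly, the paper hangs the whole tail $x_j'x_{j+1}\cdots x_h$ as a \emph{single} edge to the new leaf and creates at most a constant number of new boundary nodes per insertion, in $O(1)$ time. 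Your denser decomposition still meets the $O(\frac{m}{\alpha}f(k,n))$ time bound claimed here, but it produces $\Theta(n/\alpha)$ micro c-tries (each with its own predecessor/successor structure) over the life of the structure, which breaks the $n\log\sigma+O(k\log n)$ space bound and the counting argument of Lemma~\ref{lem:space:large:case} --- the very lemma you invoke at the end to control the number of auxiliary nodes. The fix is simply not to subdivide the new leaf edge: leave it as one long edge and add a boundary node only where the decomposition rule demands one. Your treatment of $\Delete$ (one micro-level deletion plus a local contraction and $O(1)$ boundary-node cleanup) is fine and, if anything, more careful than the paper's one-line argument.
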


\begin{proof}
To conduct $\Insert(\phi, X)$ operation,
we first perform $\LPath(\phi, X)$ query in $O(\frac{m}{\alpha}f(k, n))$ time
using Lemma~\ref{lem:large:LCP}.
Let $x_0, \ldots, x_h$ be the factorization of $X$ w.r.t. $\phi$,
and let $x_j$ be the block of the factorization 
which contains the first mismatch.
Then, we conduct $\Insert(\gamma, x_j)$ operation
on the corresponding micro c-trie, where $\gamma$ is its root.
This takes $O(f(k, n))$ time by Lemma~\ref{lem:small:semidyn}.
If $j = h$ (i.e. $x_j$ is the last block in the factorization of $X$), 
then we are done.
Otherwise, we create a new edge whose label is
$x_{j}' x_{j+1} \cdots x_k$,
where $x_{j}'$ is the suffix of $X_j$ which begins with the 
mismatched position, leading to the new leaf $\ell$.
We create a new boundary node if necessary.
These operations take $O(1)$ time each.
Hence, $\Insert(\phi, X)$ is supported in $O(\frac{m}{\alpha}f(k, n))$ 
total time.

For $\Delete(X)$ operation,
we perform the operation of Lemma~\ref{lem:small:semidyn}
for each micro c-trie in the path from the root to the leaf representing $X$.
Since there are at most $\frac{m}{\alpha}$ such micro c-tries,
$\Delete(X)$ can be supported in $O(\frac{m}{\alpha}f(k, n))$ total time.
\end{proof}


\subsubsection{Speeding-up with hashing.}
By augmenting each micro c-trie with a hash table
storing the short strings in the trie,
we can achieve a good expected performance, as follows:
\begin{lemma} \label{lem:large:hash}
The packed c-trie $\PTree{S}$ with hashing
supports $\LPath(\phi, X)$ query, $\Insert(\phi, X)$ 
and $\Delete(X)$ operations
in $O(\frac{m}{\alpha} + f(k, n))$ expected time.
\end{lemma}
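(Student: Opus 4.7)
The plan is to equip each micro c-trie $\sig{MT}$ with an auxiliary hash table $\mathcal{H}_{\sig{MT}}$ that maps each leaf's label string (of length exactly $\alpha$, padded with the special letter $\$$ from Remark~\ref{rem:shorter_then_alpha} when necessary) to the leaf itself. Dynamic hashing supports expected $O(1)$ membership queries and updates in $O(k \log n)$ bits of space, so the overall space bound of Lemma~\ref{lem:space:large:case} is preserved. The point of this table is that an exact probe can verify, in $O(1)$ expected time, that the next $\alpha$ letters of the pattern take us cleanly through one entire micro c-trie from its root to a boundary node, thereby avoiding the $O(f(k,n))$ cost of calling the small-tree $\LPath$ at every intermediate level.

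For $\LPath(\phi, X)$, I would factorize $X$ into blocks $p_0, p_1, \ldots, p_h$ exactly as in the proof of Lemma~\ref{lem:large:LCP}, so that each $p_i$ with $1 \leq i \leq h-1$ has length $\alpha$ and corresponds to a full traversal of one micro c-trie, while the two boundary blocks $p_0$ and $p_h$ may be partial. For the initial (partial) block $p_0$ I would invoke the micro c-trie $\LPath$ of Theorem~\ref{thm:small:opr:LCP} once, at cost $O(f(k,n))$ expected. For each subsequent block $p_i$ ($i \geq 1$) we currently stand at the root $\gamma$ of the next micro c-trie $\sig{MT}_\gamma$; I first probe $\mathcal{H}_{\sig{MT}_\gamma}$ with $p_i$. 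On a hit, we advance in $O(1)$ expected time to the corresponding boundary node and move to the next level. On the first miss, the mismatch must lie strictly inside $\sig{MT}_\gamma$, so I finish by invoking $\LPath(\gamma, p_i)$ on that micro c-trie in $O(f(k,n))$ expected time (handling the final partial block $p_h$ analogously via the $\#$ padding of Remark~\ref{rem:shorter_then_alpha}). We thus pay $O(1)$ expected per pass-through level and $O(f(k,n))$ expected at most twice, for a total of $O(m/\alpha + f(k,n))$ expected.

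For $\Insert(\phi, X)$, I would first run $\LPath$ as above at cost $O(m/\alpha + f(k,n))$ expected, and then attach a new leaf representing $\Str(\phi)X$ at the returned locus. The new non-branching suffix path crosses at most $O(m/\alpha)$ level boundaries; at each such crossing I may create a one-leaf auxiliary micro c-trie and insert one hash-table entry in $O(1)$ expected time. Only at the single micro c-trie that contains the split point do I touch the predecessor/successor dictionary, costing $O(f(k,n))$ by Lemma~\ref{lem:small:semidyn}. The total cost is again $O(m/\alpha + f(k,n))$ expected. Deletion is symmetric: locate the leaf representing $X$, remove it, update the $O(m/\alpha)$ hash tables along the way, and do one predecessor/successor update in the single affected micro c-trie.

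The main subtlety, and the only place where care is really needed, is that a hash probe can only \emph{confirm or refute} exact equality through an entire micro c-trie; it cannot itself locate the position of a mismatch. I therefore have to argue that the first negative probe is indeed on the correct level at which to invoke the small-tree $\LPath$, and that the partial blocks $p_0$ and $p_h$ are handled consistently via the $\$$/$\#$ padding convention of Remark~\ref{rem:shorter_then_alpha} so that the expected cost remains $O(f(k,n))$ rather than being accidentally multiplied by the number of levels. Once these bookkeeping matters are in place, the expected bound $O(m/\alpha + f(k,n))$ follows by summing the per-level $O(1)$ expected cost with the one $O(f(k,n))$ charge for the terminal micro c-trie.
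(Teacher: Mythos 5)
Your proposal matches the paper's proof in all essentials: a per-micro-c-trie hash table over the (padded) length-$\alpha$ strings, one $O(1)$ expected probe per fully traversed level, and a single fall-back to the $O(f(k,n))$ micro-trie $\LPath$ of Theorem~\ref{thm:small:opr:LCP} at the first miss, with $\Insert$/$\Delete$ inheriting the bound. The only cosmetic difference is that the paper handles the initial partial block by probing the hash with $\Str(\phi)x_0$ rather than immediately calling the small-trie $\LPath$ as you do; both cost $O(f(k,n))$ and the resulting bound is identical.
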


\begin{proof}
Let $\sig MT$ be any micro c-trie in the packed c-trie $\PTree{S}$,
and $M$ the set of strings maintained by $\sig MT$ 
each being of length at most $\alpha$.
We store all strings of $M$ in a hash table associated to $\sig MT$,
which supports look-ups, insertions and deletions in $O(1)$ expected time.

Let $x_0, \ldots, x_h$ be the factorization of $X$ w.r.t. $\phi$.
To perform $\LPath(\phi, X)$,
we ask if $\Str(\phi)x_0$ is 
in the hash table of the corresponding micro c-trie.
If the answer is no, the first mismatch occurs in $x_0$, 
and the rest is the same as 
in Lemma~\ref{lem:large:LCP}.
If the answer is yes, then for each $i$ from $1$ to $h$ in increasing order,
we ask if $x_i$ is in the hash table of the corresponding micro c-trie,
until we receive the first no with some $i$ or 
we receive yes for all $i$'s.
In the latter case, we are done.
In the former case, we perform $\LPath$ query with $x_i$ from the root of the 
corresponding micro c-trie.
Since we perform at most one $\LPath$ query 
and $O(\frac{m}{\alpha})$ look-ups for hash tables, 
it takes $O(\frac{m}{\alpha} + f(k, n))$ expected time.
$O(\frac{m}{\alpha} + f(k, n))$ expected time bounds for 
$\Insert(\phi, X)$ and $\Delete(X)$ immediately follow
from the above arguments.
\end{proof}

\section{Applications to online string processing}
\label{sec:appl}

In this section, we present two applications 
of our packed c-tries for online string processing.



\subsubsection{Online sparse suffix tree construction.}
The \emph{suffix tree}~\cite{Weiner} of a string $T$ of length $n$ is 
a compact trie which stores all $n$ suffixes 
$\Suf(T) = \{T[i..n] \mid 1 \leq i \leq n\}$ of $T$
in $n \log \sigma + O(n \log n)$ bits.
A \emph{sparse suffix tree} for a set $K \subseteq [1, n]$ 
of \emph{sampled positions} of $T$ is a compact trie
which stores only the subset $\{T[i..n] \mid i \in K\}$
of the suffixes of $T$ beginning at the sampled positions in $K$.
It is known that if the set $K$ of sampled positions
satisfy some properties (e.g., every $r$ positions for some fixed $r > 1$
or the positions immediately after the word delimiters),
the sparse suffix tree can be constructed 
in an online manner in $O(n \log \sigma)$ time
and $n \log \sigma + O(n \log n)$ bits of space~\cite{Karkkainen:Ukkonen:1996,Inenaga:Takeda:CPM:2006,Uemura:Arimura:CPM:2011}.

In this section, we show our packed c-tries can be used to 
speed up online construction and pattern matching for these sparse suffix trees.
We insert the suffixes in increasing order 
of their beginning positions (sampled positions) to the packed c-trie.
There, each input string $X$ to $\Insert(\phi, X)$ operation is 
given as a pair $(i, j)$ of positions in $T$ such that $X =T[i,j]$.
In this case, $\Insert$ operation can be processed more quickly
than in Lemma~\ref{lem:large:LCP}, as follows.

\begin{lemma} \label{lem:faster_insert}
Given a pair $(i, j)$ of positions in $T$ such that $X = T[i,j]$,
we can support $\Insert(\phi, X)$ operation in $O(\frac{q}{\alpha} f(k,n))$ 
worst-case time or $O(\frac{q}{\alpha} + f(k,n))$ expected time,
where $q$ is the length of the longest prefix of $X$
that can be spelled out from the locus $\phi$.
\end{lemma}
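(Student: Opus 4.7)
The central idea is that since $X=T[i,j]$ is passed as a pair of positions in $T$, any portion of $X$ beyond the first mismatch can be represented in $O(1)$ time and space by a positional edge label, without being scanned character by character. Thus the work is proportional to the matched prefix length $q$ rather than $m=|X|$.

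First, I would reuse the $\LPath(\phi, X)$ procedure from Lemma~\ref{lem:large:LCP} (for the worst-case bound) or its hashed version from Lemma~\ref{lem:large:hash} (for the expected bound), observing that both procedures already terminate at the first block containing a mismatch. Since the matched prefix has length exactly $q$, only $\lceil q/\alpha\rceil + O(1)$ blocks of $X$ are inspected before the mismatch is localized. The characters of these blocks are extracted in constant time each from $T$ via packed reads, using the position pair $(i,j)$. Each traversed block costs $O(f(k,n))$ in the worst case (a micro c-trie $\LPath$ call) or $O(1)$ expected (a hash lookup), with an additional $O(f(k,n))$ expected cost on the single block where the mismatch lies. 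This locates the mismatch locus $\hat\phi$ at depth $d(\phi)+q$ within $O(\tfrac{q}{\alpha}f(k,n))$ worst-case or $O(\tfrac{q}{\alpha}+f(k,n))$ expected time.

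Next, I would split the edge at $\hat\phi$ if it is an internal locus, thereby creating a new original node, and attach a new leaf via an edge whose label is the suffix $T[i+q,j]$ of $X$. Since edge labels in $\PTree{S}$ are stored as position triples, this label is recorded simply as the pair $(i+q,j)$, so the edge and leaf are installed in $O(1)$ time regardless of $m-q$. The unique micro c-trie that contains $\hat\phi$ must be updated, using Lemma~\ref{lem:small:semidyn}, to register the new original node and the new leaf that exits the micro c-trie; this costs $O(f(k,n))$ worst case or $O(1)$ expected.

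Finally, I would install the auxiliary boundary nodes required by the micro-trie decomposition so that the invariants of Section~\ref{subsec:large} are preserved. The main obstacle, and the step requiring the most care, is that the new edge may cross many boundary levels yet the construction tolerates only a constant overhead per inserted original edge. I would argue, as in the accounting used in the proof of Lemma~\ref{lem:space:large:case}, that auxiliary boundary nodes are only attached adjacent to an original node, so the two new original nodes (the split node and the new leaf) force at most $O(1)$ auxiliaries to be created; every other boundary crossed by the new edge is already handled implicitly because there is no original node adjacent to it on this root-to-leaf path. Each such auxiliary insertion updates at most two micro c-tries in $O(f(k,n))$ worst-case or $O(1)$ expected time. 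Summing the three phases yields the claimed bound $O(\tfrac{q}{\alpha}f(k,n))$ worst-case, or $O(\tfrac{q}{\alpha}+f(k,n))$ expected.
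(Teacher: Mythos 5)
Your proposal is correct and follows essentially the same route as the paper's proof: extract the blocks of $X$ from $T$ on the fly via packed reads, stop the $\LPath$ traversal at the first mismatching block (so only $O(q/\alpha)$ blocks are ever touched), and attach the new leaf with an edge label given by a pair of positions in $T$ in $O(1)$ time. Your additional accounting for the auxiliary boundary nodes is a detail the paper leaves implicit, and it is handled correctly.
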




\newcommand{\myceil}[1]{{(#1)}}

\begin{theorem} \label{thm:sparse_suffix_tree}
Using our packed c-tries,
we can construct in an online manner
the sparse suffix trees of 
\cite{Inenaga:Takeda:CPM:2006,Karkkainen:Ukkonen:1996,Uemura:Arimura:CPM:2011} for a given text $T$ of length $n$
in $O((\frac{n}{\alpha} + k)f(k, n))$ worst-case time 
or in $O(\frac{n}{\alpha} + kf(k, n))$ expected time
with $n\log\sigma + O(k\log n)$ bits of space, 
where $k$ is the number of sampled positions.
At any moment during the construction,
pattern matching queries can be supported in
$O(\frac{m}{\alpha}f(k,n))$ worst-case time
or in $O(\frac{m}{\alpha} + f(k,n))$ expected time,
where $m$ is the length of the pattern.
\end{theorem}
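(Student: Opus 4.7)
The plan is to adapt the classical suffix-link based online constructions of~\cite{Karkkainen:Ukkonen:1996,Inenaga:Takeda:CPM:2006,Uemura:Arimura:CPM:2011}, simply replacing their underlying Patricia tree by our packed c-trie $\PTree{S}$. As in those algorithms, we process the sampled positions in increasing order, each time inserting the corresponding suffix of $T$ as a new leaf; immediately after each insertion we follow the suffix link from the parent of the newly created leaf to obtain a locus $\phi$ from which the descent for the \emph{next} insertion is started. This is precisely the reason why we designed $\Insert(\phi,\cdot)$ and $\LPath(\phi,\cdot)$ to accept an arbitrary starting locus $\phi$ rather than only the root.

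The key quantitative ingredient is Lemma~\ref{lem:faster_insert}: when the inserted string is passed as a substring $T[i,j]$ and the starting locus is $\phi$, the cost is $O\!\left((\frac{q_i}{\alpha}+1)f(k,n)\right)$ worst case, or $O(\frac{q_i}{\alpha}+f(k,n))$ expected, where $q_i$ is the prefix of $T[i,j]$ actually scanned on the way down. The same amortization that underlies Ukkonen's construction and its sparse variants shows that $\sum_{i\in K} q_i = O(n)$: the quantity ``current text position plus current string depth'' is non-decreasing across insertions, and each suffix-link step decreases the depth by at most one symbol (or one word, in the word-suffix-tree case of~\cite{Inenaga:Takeda:CPM:2006}). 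Summing Lemma~\ref{lem:faster_insert} over the $k$ insertions therefore yields
\[
  \sum_{i} O\!\left(\tfrac{q_i}{\alpha}+1\right)f(k,n) \;=\; O\!\left((\tfrac{n}{\alpha}+k)\,f(k,n)\right)
\]
in the worst case, and $\sum_{i} O(\frac{q_i}{\alpha}+f(k,n)) = O(\frac{n}{\alpha}+k\, f(k,n))$ in expectation.

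The space bound $n\log\sigma + O(k\log n)$ is immediate from Lemma~\ref{lem:space:large:case} applied to the final set of $k$ sampled suffixes, which becomes prefix-free once a unique terminator is appended to $T$. The pattern-matching claim is handled independently: at any intermediate moment the current data structure is a valid packed c-trie, so a query of length $m$ is answered by $\LPath((\bot,\Root(\PTree{S})),P)$ with the bound given by Lemma~\ref{lem:large:LCP} (worst case) or Lemma~\ref{lem:large:hash} (expected).

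The main obstacle is to verify that suffix links for sampled suffixes can themselves be maintained within the claimed budget on top of $\PTree{S}$. When an $\Insert$ splits an edge and creates a new branching node, we have to install that node's suffix link and close the pending suffix link of the node created in the previous insertion, all within $O(f(k,n))$ time; we also have to make sure that these updates interact correctly with the auxiliary boundary nodes of the micro-trie decomposition of Section~\ref{subsec:large}. The first issue is settled by the standard constant-time bookkeeping of Ukkonen's algorithm, which survives the packed setting because the suffix-link source/target is always a branching node and can be located together with its ancestor micro c-trie in $O(f(k,n))$ time. The second issue is resolved by observing that auxiliary boundary nodes are never suffix-link endpoints: they can be freely destroyed and recreated in $O(1)$ per affected edge whenever an $\Insert$ or $\Delete$ modifies a micro c-trie, which is already absorbed into the $O(f(k,n))$ per-insertion term.
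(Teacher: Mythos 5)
Your proposal matches the paper's proof in all essentials: both replace the Patricia tree in the K\"arkk\"ainen--Ukkonen-style online constructions by the packed c-trie, apply Lemma~\ref{lem:faster_insert} to each of the $k$ insertions started from the locus reached via an $r$-letter suffix link, invoke the standard Ukkonen-type amortization $\sum_j q_j = O(n)$ to obtain the claimed worst-case and expected bounds, and derive the space and query bounds from Lemmas~\ref{lem:space:large:case}, \ref{lem:large:LCP} and~\ref{lem:large:hash}. Your additional remarks on installing suffix links for newly created branching nodes and on excluding auxiliary boundary nodes from the suffix-link structure only make explicit bookkeeping that the paper leaves implicit.
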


\subsubsection{Online computation of LZ-Double factorization.}
The \emph{LZ-Double factorization}~\cite{GotoBIT15} (\emph{LZDF})
is a generalization of the Lempel-Ziv 78 factorization~\cite{LZ78}.
The $i$th factor $g_i = g_{i_1} g_{i_2}$ of the LZDF factorization of
a string $T$ of length $n$ is the concatenation of 
previous factors $g_{i_1}$ and $g_{i_2}$ such that
$g_{i_1}$ is the longest prefix of $T[1+\sum_{j=1}^{i-1}|g_j|,n]$ 
that is a previous factor (one of $\{g_1, \ldots, g_{i-1}\} \cup \Sigma$),
and $g_{i_2}$ is the longest prefix of $T[1+|g_{i_1}|+\sum_{j=1}^{i-1}|g_j|,n]$ 
that is a previous factor.
Goto et al.~\cite{GotoBIT15} proposed
a Patricia-tree based algorithm which
computes the LZDF of a given string $T$ of length $n$
in $O(k(M + \min\{k, M\} \log \sigma))$ worst-case 
time\footnote{Since $kM \geq n$ always hods, the $n$ term is hidden in the time complexity.}
with $O(k \log n) = O(n \log \sigma)$ bits of space\footnote{Since all the factors of the LZDF are distinct, $k = O(\frac{n}{\log_\sigma n})$ holds~\cite{LZ78}.},
where $k$ is the number of factors and
$M$ is the length of the longest factor.
Using our packed c-trie,
we can achieve a good expected performance:
\begin{theorem} \label{theo:LZDF}
Using our packed c-trie,
we can compute the LZDF of string $T$ 
in $O(k(\frac{M}{\alpha} + f(k, n)))$ expected time 
with $O(n \log \sigma)$ bits of space.
\end{theorem}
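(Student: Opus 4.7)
The plan is to replay the online LZDF algorithm of Goto et al.~\cite{GotoBIT15} with our packed c-trie with hashing (Lemma~\ref{lem:large:hash}) in place of their Patricia tree. I maintain a packed c-trie $\PTree{F}$ storing the set $F = \{g_1, \ldots, g_{i-1}\} \cup \Sigma$ of previous factors, each represented by a position pair in $T$ and marked on the unique node it occupies after insertion. Processing factor $i$ from the text cursor $c_i = 1 + \sum_{t<i}|g_t|$ has three steps: (i) find $g_{i_1}$ as the longest prefix of $T[c_i,n]$ that is a previous factor; (ii) find $g_{i_2}$ as the longest prefix of $T[c_i + |g_{i_1}|,n]$ that is a previous factor; (iii) insert the new factor $g_i = g_{i_1}g_{i_2}$ into $\PTree{F}$.

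Each search in (i) or (ii) reduces to an $\LPath$ query from the root of $\PTree{F}$, with the pattern supplied by two text positions so that every $\alpha$-letter block is fetched in $O(1)$ time from the packed representation of $T$. Since $|g_{i_1}|, |g_{i_2}|, |g_i| \le M$, by Lemma~\ref{lem:large:hash} each of (i), (ii), and (iii) runs in $O(\frac{M}{\alpha} + f(k,n))$ expected time. Summing over the $k$ factors yields the claimed $O(k(\frac{M}{\alpha} + f(k,n)))$ expected total. For space, since all factors are distinct~\cite{LZ78} we have $k = O(n/\alpha)$, hence $k\log n = O(n\log\sigma)$; combined with the $n\log\sigma$ bits for the packed text $T$, which the trie reuses via position-pair edge labels, the whole structure occupies $O(n\log\sigma)$ bits.

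The main obstacle is that a plain $\LPath$ query returns the deepest locus at which the pattern still matches the trie, whereas the LZDF primitive must return the deepest \emph{marked} node (previous factor) on the descent path, since $g_{i_1}$ itself must lie in $F$. I would handle this by attaching, to each micro c-trie, an auxiliary predecessor dictionary over the $\alpha$-bit encodings of its marked strings, and by threading a running deepest-marked-so-far pointer through the level-by-level hashing descent. Each micro c-trie that the descent traverses contributes only an $O(1)$ update (a single lookup of whether its boundary string is marked, plus comparison with the running pointer), and only the terminal mismatched micro c-trie incurs one extra predecessor query of cost $O(f(k,n))$ to locate the longest marked prefix of the unmatched block. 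The marker tracking therefore inherits the same $O(\frac{M}{\alpha} + f(k,n))$ per-factor expected bound, and the claimed overall complexity follows.
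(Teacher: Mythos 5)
Your overall architecture (a packed c-trie over the set of previous factors, marked nodes for factors, one $\LPath$ descent per factor followed by a deepest-marked-ancestor retrieval, then an insertion) coincides with the paper's, and your accounting of the $O(k(\frac{M}{\alpha}+f(k,n)))$ descent/insertion cost and of the $O(n\log\sigma)$-bit space bound is fine. The gap is in your marker-tracking mechanism. A previous factor can occupy a node at an arbitrary string depth, in particular strictly \emph{inside} a micro c-trie that the hashing descent of Lemma~\ref{lem:large:hash} skips over with a single $O(1)$ hash lookup. Concretely, suppose $g_{j_1}$ has length $1.5\alpha$ while the text $T_j$ agrees with some other stored string down to depth $2.7\alpha$: the descent accepts blocks $1$ and $2$ by hash membership, finds the boundary nodes at depths $\alpha$ and $2\alpha$ unmarked, and your single terminal predecessor query searches only the third-level micro c-trie, so the marked node for $g_{j_1}$ at depth $1.5\alpha$ is never examined and your running pointer reports a too-shallow factor (or none). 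Checking only whether the boundary string of each traversed micro c-trie is marked is therefore insufficient, and upgrading that check to a predecessor query over the marked strings of \emph{every} traversed micro c-trie would cost $O(\frac{M}{\alpha}f(k,n))$ per factor, destroying the claimed bound.

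The paper sidesteps this by decoupling marker retrieval from the descent: it maintains one semi-dynamic nearest-marked-ancestor structure of Westbrook~\cite{westbrook92:_fast_increm_planar_testin} over the whole trie, supporting NMA queries, node insertions and markings in amortized $O(1)$ time, and observes that the answer locus $\hat{\phi}$ of $\LPath(r,T_j)$ always lies in the subtree rooted at the node for $g_{j_1}$, so a single NMA query at $\hat{\phi}$ returns $g_{j_1}$ directly (and likewise for $g_{j_2}$). Replacing your per-level tracking with such a global $O(1)$-amortized marked-ancestor scheme closes the gap and recovers the stated complexity; the rest of your argument then goes through.
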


\newcommand{\cellcenter}[1]{\multicolumn{1}{|c||}{#1}}
\newcommand{\cellsinglecenter}[1]{\multicolumn{1}{|c|}{#1}}

\section{Experiments}

In this section, we show our experimental results that compared 
our implementations of the packed c-trie against that of the classical c-trie (Patricia tree).
In Table~\ref{tab:exp:dataset}, we show the datasets and their statistics used in our experiments, where the first six datasets were from Pizza\&Chili Corpus\footnote{Pizza\&Chili Corpus, http://pizzachili.dcc.uchile.cl},  the seventh one consists of URLs in \texttt{uk} domain\footnote{Laboratory for webalgorithmics, uk-2005.urls.gz, http://law.di.unimi.it/datasets.php}, and the eighth one consists of all titles from Japanese Wikipedia\footnote{jawiki, https://dumps.wikimedia.org/jawiki/}. The datasets were treated as binary.

\begin{table}[bt]
 \caption{Description of the datasets}
 \label{tab:exp:dataset}
 \begin{center} 
  \begin{tabular}{|l||r|r|r|r|r|} 
   \hline
   \cellcenter{Data set}
   & \shortstack{Original\\ alhpabet~size}
   & \shortstack{Actual\\ alphabet~size}
   &  \shortstack{Total size \\ (byte)}
   & \shortstack{Number of \\ strings}
  & \shortstack{Ave. string  \\ length (bit)} \\ 
   \hline
   DNA  & 4 & 2 & 52,428,800 & 337 & 1,244,600.59  \\ 
   DBLP  & 128 & 2 &  52,428,800 & 3,229,589 & 129.87  \\ 
   english  & 128 & 2 &  52,428,800 & 9,400,185 & 44.62 \\ 
   pitches  & 128 & 2 &  52,428,800 & 93,354 & 4,492.90  \\ 
   proteins & 20 & 2 &  52,428,800 & 186,914 & 2,243.98  \\ 
   sources & 128 & 2 &  52,428,800 & 5,998,228 & 69.93  \\ 
   urls & 128 & 2 &  52,010,031 & 707,658 & 587.97  \\ 
   jawiki & $\ge 2^{16}$ & 2 &  30,414,297 & 1,643,827 & 148.02  \\ 
   \hline
  \end{tabular}
 \end{center}
\end{table}



We used three implementations of c-tries over the binary alphabet by the authors: an implementation $\alg{CT}$ of classical c-tries, and two simplified implementations $\alg{PCT_{xor}}$ and $\alg{PCT_{hash}}$ of our packed c-tries in Section~\ref{sec:algo} as a proof-of-concept versions. The machine word length $\alpha$ is 32 bits. The details are as follows: 
$\alg{PCT_{xor}}$ only uses the XOR-based technique of Theorem~\ref{lem:large:LCP}, and branching nodes are processed as in the classical c-tries.
$\alg{PCT_{hash}}$ is a simplified implementation of our packed c-tries of Lemma~\ref{lem:large:hash} using hashing. It is equipped with hash tables for $\alpha$-bits integers\footnote{For hash tables, we used the \it{unordered\_map} in C++/STL library.}, but without predecessor/successor data structures. 

We compiled all programs with gcc 4.9.3 using -O3 option, and ran all experiments on a PC (2.8GHz Intel Core i7 processor, register size 64 bits, 16GB of memory) running on MacOS X 10.10.5, where consecutive $\alpha =32$ bits of texts were packed into a machine word.  
For each dataset, we measured the following parameters: 
the number of nodes in the constructed c-trie (\textit{Tree size}),  
the total construction time for the c-trie (\textit{Construction time}), and 
the total time of pattern matching queries (\textit{Query time}).
In the last experiments, pattern strings are consist of  the dataset used for construction. 



\begin{table}[!t]
 \caption{The summary of experimental results}
 \label{tab:exp:result}
 \begin{center}\tabcolsep=0.5em   
  \begin{tabular}{|l||r|r|r|r|r|r|r|r|r|r|r|} 
   \hline
   &\multicolumn{3}{|c|}{Tree size (\# of nodes)}
   &\multicolumn{3}{|c|}{Construction time (msec)}
   &\multicolumn{3}{|c|}{Query time (msec)} \\ 
   \cline{2-10}
   \multicolumn{1}{|c||}{Data set}
   & \multicolumn{1}{|c|}{$\alg{CT}$ }
   & \multicolumn{1}{|c|}{$\alg{PCT_{xor}}$}
   & \multicolumn{1}{|c|}{$\alg{PCT_{hash}}$}
   & \multicolumn{1}{|c|}{$\alg{CT}$} &\multicolumn{1}{|c|}{$\alg{PCT_{xor}}$}  & \multicolumn{1}{|c|}{$\alg{PCT_{hash}}$}  &\multicolumn{1}{|c|}{$\alg{CT}$}  &\multicolumn{1}{|c|}{$\alg{PCT_{xor}}$}  &\multicolumn{1}{|c|}{$\alg{PCT_{hash}}$} \\    
   \hline
   DNA & 674 & 674 & 985 &  \textbf{14,494} & 15,270 & 18,596 & 6,690 & 7,381 &  \textbf{5,342}    \\ 
   DBLP & 1,059,656& 1,059,656 & 1,204,651 & 16,662 & 16,987 &  \textbf{14,139} & 8,083 & 8,905 & \textbf{7,209} \\ 
   english & 448,379& 448,379 & 532,750 & 17,496 &  \textbf{16,944} &  18,197 & \textbf{9,127} & 9,916 & 10,452\\ 
   pitches & 86,205& 86,205 & 121,943 & 18,816 & 16,571 &  \textbf{16,520} & 7,022 & 9,009 & \textbf{6,053}\\ 
   proteins & 310,392& 310,392 & 437,768 & 17,957 & \textbf{15,733} &  18,673 & 8,511 & 8,851  & \textbf{6,749}\\ 
   sources & 1,314,571& 1,314,571 & 1,616,872 & 17,398 & \textbf{15,929} &  16,892 & 8,111 & 8,444 & \textbf{7,852}\\ 
   urls & 1,341,200& 1,341,200 & 1,357,730 & 14,038 & \textbf{13,422} &  13,585 & 6,939 & 6,903 & \textbf{5,918}\\ 
   jawiki & 2,365,821& 2,365,821 & 3,043,817 & 9,440 &  \textbf{9,116} & 10,107 & 4,477 & 4,661 & \textbf{3,962}\\ 
   \hline
  \end{tabular}
 \end{center}
\end{table}



In Table~\ref{tab:exp:result}, we show our experimental results.
First, we consider the first groups of columns on tree size. 
We observed that the number of nodes of \alg{$\alg{PCT_{hash}}$} increases from
both of $\alg{CT}$ and \alg{$\alg{PCT_{xor}}$}.
The gain varies from $101.3\%$ on urls to $146.1\%$ on DNA.
This comes from the addition of boundary nodes.
Next, we consider the second groups of columns on construction time.
We observed that \alg{$\alg{PCT_{xor}}$} is slightly faster than the classical $\alg{CT}$ in most case.
The construction time of \alg{$\alg{PCT_{hash}}$} is slightly faster against $\alg{CT}$ for DBLP, pitches, sources and urls,
and slower for DNA, english, proteins and jawiki.
Yet, the construction time of \alg{$\alg{PCT_{hash}}$} per node is faster than $\alg{CT}$ for all datasets.
We, however, did not observe clear advantage of \alg{$\alg{PCT_{hash}}$} over \alg{$\alg{PCT_{xor}}$}.
We guess that these inconsistency comes from the balance of utility and overhead of creating boundary nodes that depends on datasets.
Finally, we consider the third groups of columns on query time.
Among all datasets except {english}, \alg{$\alg{PCT_{hash}}$} is clearly faster than $\alg{CT}$, where the former achieved $5\%$ to $20\%$ speed-up over the latter. This indicates that \alg{$\alg{PCT_{hash}}$} is superior to the classic c-tries in prefix search.

Overall, we conclude that one of our packed c-trie implementation \alg{$\alg{PCT_{hash}}$} achieved clear speed-up over the classical c-trie implementation in query time for most datasets. In construction time, there seems room of improvements for reducing the overhead of node and hash table creation. 









\bibliographystyle{abbrv}


\begin{thebibliography}{10}

\bibitem{AnderssonT07}
A.~Andersson and M.~Thorup.
\newblock Dynamic ordered sets with exponential search trees.
\newblock {\em J. {ACM}}, 54(3):13, 2007.

\bibitem{Beame:Fich:JCSS:2002}
P.~Beame and F.~E. Fich.
\newblock Optimal bounds for the predecessor problem and related problems.
\newblock {\em Journal of Computer and System Sciences}, 65(1):38 -- 72, 2002.

\bibitem{Belazzougui:SPIRE:2010}
D.~Belazzougui, P.~Boldi, and S.~Vigna.
\newblock Dynamic {Z}-fast tries.
\newblock In {\em Proc.~SPIRE 2010}, volume 6393 of {\em LNCS}, pages 159--172,
  2010.

\bibitem{Benkiki:Bille:etal:FSTTCS:2011}
O.~Ben-Kiki, P.~Bille, D.~Breslauer, L.~Gasieniec, R.~Grossi, and O.~Weimann.
\newblock Optimal packed string matching.
\newblock In {\em Proc.~FSTTCS 2011}, volume~13, pages 423--432, 2011.

\bibitem{Ferragina:Grossi:JACM:1999}
P.~Ferragina and R.~Grossi.
\newblock The string {B}-tree: a new data structure for string search in
  external memory and its applications.
\newblock {\em J.~ACM}, 46(2):236--280, 1999.

\bibitem{FischerG2015}
J.~Fischer and P.~Gawrychowski.
\newblock Alphabet-dependent string searching with wexponential search trees.
\newblock In {\em CPM 2015}, pages 160--171, 2015.
\newblock Full version is available at \url{http://arxiv.org/abs/1302.3347}.

\bibitem{FredmanW93}
M.~L. Fredman and D.~E. Willard.
\newblock Surpassing the information theoretic bound with fusion trees.
\newblock {\em J. Comput. Syst. Sci.}, 47(3):424--436, 1993.

\bibitem{GotoBIT15}
K.~Goto, H.~Bannai, S.~Inenaga, and M.~Takeda.
\newblock {LZD} factorization: Simple and practical online grammar compression
  with variable-to-fixed encoding.
\newblock In {\em Proc. CPM 2015}, pages 219--230, 2015.

\bibitem{Hon:Lam:Shah:Vitter:ISAAC:2009}
W.-K. Hon, T.-W. Lam, R.~Shah, S.-L. Tam, and J.~Vitter.
\newblock Succinct index for dynamic dictionary matching.
\newblock In {\em Proc.~ISAAC'09, LNCS 5878}, pages 1034--1043, 2009.

\bibitem{Inenaga:Takeda:CPM:2006}
S.~Inenaga and M.~Takeda.
\newblock On-line linear-time construction of word suffix trees.
\newblock In {\em Proc.~CPM'06}, volume 4009 of {\em LNCS}, pages 60--71, 2006.

\bibitem{JanssonSS15}
J.~Jansson, K.~Sadakane, and W.~Sung.
\newblock Linked dynamic tries with applications to {LZ}-compression in
  sublinear time and space.
\newblock {\em Algorithmica}, 71(4):969--988, 2015.

\bibitem{Karkkainen:Ukkonen:1996}
J.~K\"arkk\"ainen and E.~Ukkonen.
\newblock Sparse suffix trees.
\newblock In {\em Proc.~COCOON'96}, volume 1090 of {\em LNCS}, pages 219--230,
  1996.

\bibitem{Morrison:1968}
D.~R. Morrison.
\newblock {PATRICIA}: {P}ractical algorithm to retrieve information coded in
  alphanumeric.
\newblock {\em J. ACM}, 15(4):514--534, 1968.

\bibitem{Uemura:Arimura:CPM:2011}
T.~Uemura and H.~Arimura.
\newblock Sparse and truncated suffix trees on variable-length codes.
\newblock In {\em Proc.~CPM'11}, volume 6661 of {\em LNCS}, pages 246--260,
  2011.

\bibitem{Weiner}
P.~Weiner.
\newblock Linear pattern-matching algorithms.
\newblock In {\em Proc. of 14th IEEE Ann. Symp. on Switching and Automata
  Theory}, pages 1--11, 1973.

\bibitem{westbrook92:_fast_increm_planar_testin}
J.~Westbrook.
\newblock Fast incremental planarity testing.
\newblock In {\em ICALP 1992}, pages 342--353, 1992.

\bibitem{Willard:83}
D.~E. Willard.
\newblock Log-logarithmic worst-case range queries are possible in space
  {$\Theta(N)$}.
\newblock {\em Information Processing Letters}, 17:81--84, 1983.

\bibitem{Willard:84}
D.~E. Willard.
\newblock New trie data sturucture which support very fast search operations.
\newblock {\em Journal of Computer and System Sciences}, 28:379--394, 1984.

\bibitem{LZ78}
J.~Ziv and A.~Lempel.
\newblock Compression of individual sequences via variable-length coding.
\newblock {\em IEEE Transactions on Information Theory}, 24(5):530--536, 1978.

\end{thebibliography}

\newpage
\appendix
\section{Appendix}

In this appendix, we show some proofs
which were omitted due to lack of space.

\subsection{Proof of Lemma~\ref{lem:faster_insert}}

\begin{proof}
Recall the algorithm of Lemma~\ref{lem:large:LCP}.
Starting at the beginning position $i$ in $T$,
we extract the factors of the factorization of $X$ w.r.t. 
the given initial locus $\phi$ on the fly, one by one and from left to right.
We stop the procedure as soon as we find the first mismatch.
Then, we create a new leaf for the inserted string.
The label of the edge leading to the new leaf 
is a pair of positions in $T$, which can be easily computed in $O(1)$ time.
Clearly this gives the desired bounds.
\end{proof}

\subsection{Proof of Theorem~\ref{thm:sparse_suffix_tree}}

\begin{proof}
We explain how we can build the sparse suffix trees of~\cite{Karkkainen:Ukkonen:1996} efficiently.
For an integer parameter $r > 1$,
K\"arkk\"ainen and Ukkonen's algorithm (KU-algorithm, in short)~\cite{Karkkainen:Ukkonen:1996} constructs the \emph{$r$-evenly sparse suffix tree} 
of the input string $T$.
KU-algorithm differs from Ukkonen's online suffix tree construction algorithm 
in that KU-algorithm uses 
\emph{$r$-letter suffix links},
such that the suffix link of each node $v$ is a pointer to
the node $u$ such that $\Str(u) = \Str(v)[r+1..|\Str(v)|]$,
but otherwise is the same as Ukkonen's algorithm.
This results in a compact trie which stores 
the evenly-spaced $\lfloor n /r \rfloor + 1$ suffixes $T[1,n]$, $T[1+r,n]$, 
\ldots, $T[1+r\lfloor n / r \rfloor,n]$ of $T$.

KU-algorithm scans the input string $T$ from left to right,
and when the algorithm processes the $i$th letter of $T$,
the $r$-evenly sparse suffix tree of $T[1,i]$ is maintained.
This is done by inserting the leaves into the current compact trie
in increasing order of the positions the leaves correspond to.
Assume that while processing the $i$th letter of $T$,
the algorithm has just inserted the $j$th leaf $\ell_j$ for 
sampled position $1+(j-1)r$ of $T$. 
If the suffix $T[1+jr,i]$ of $T[1,i]$ is not recognized by the current compact trie,
then the algorithm inserts the $(j+1)$th leaf $\ell_{j+1}$ 
for the next sampled position $1+jr$.
This can be done as follows:
For any node $v$, let $\rsl(v)$ denotes the $r$-letter 
suffix link of $v$.
Let $v_j$ be the nearest ancestor of $\ell_j$ for which 
$\rsl(v_j)$ is already defined
($v_j$ is either $\parent(\ell_j)$ or $\parent(\parent(\ell_j))$).
We follow the suffix link and let $u_{j+1} = \rsl(v_j)$.
Let $\phi_{j+1}$ be the locus of $\Str(u_{j+1})$,
namely $\phi_{j+1} = (e, |e|)$ with $e = (\parent(u_{j+1}), u_{j+1})$.
Let $X_{j+1} = T[i-h+1, i]$, where 
$h = |T[j+1,i]| - |\Str(\phi_{j+1})| = i - j - |\Str(\phi_{j+1})|$.
The leaf $\ell_{j+1}$ can be added to the compact trie
by inserting the string $X_{j+1}$ from the locus $\phi_{j+1}$.

We apply our micro-trie decomposition to the sparse suffix tree,
and use our techniques in Section~\ref{sec:algo} and 
in Lemma~\ref{lem:faster_insert}.
Then, the total time complexity to construct the $r$-evenly sparse suffix tree
of $T$ is proportional to the amount of work 
of the $\Insert$ operations of Lemma~\ref{lem:faster_insert} for all leaves.
For each $1 \leq j \leq k$ 
let $q_{j}$ be the length of the longest prefix of $X_{j}$
that can be spelled out from $\phi_{j}$.
Now we estimate $\sum_{j=1}^{k} \frac{q_{j}}{\alpha}$.
Each time we traverse an $r$-letter skipping suffix link,
the string depth decreases by $r$.
Since $k = \lfloor n /r \rfloor + 1$ and we traverse $r$-letter suffix links exactly 
$k-1$ times,
we can conclude that $\sum_{j=1}^{k}q_{j} = O(n)$,
which implies that $\sum_{j=1}^{k} \frac{q_{j}}{\alpha} = O(n / \alpha)$.
Since we perform $\Insert$ operations exactly $k$ times,
the $r$-evenly sparse suffix tree can be constructed in 
$O((\frac{n}{\alpha} + k) f(k,n))$ worst-case time
or in $O(\frac{n}{\alpha} + kf(k,n))$ expected time.

The bounds for word suffix trees of Inenaga and Takeda~\cite{Inenaga:Takeda:CPM:2006}
and those of suffix trees on variable-length codes of 
Uemura and Arimura~\cite{Uemura:Arimura:CPM:2011} can be obtained similarly.
\end{proof} 

\subsection{Proof of Theorem~\ref{theo:LZDF}}

\begin{proof}
Suppose we have computed the first $j-1$ factors
$g_1, \ldots, g_{j-1}$ and we are now computing the $j$th factor $g_j$.
We store the previous factors $g_1, \ldots, g_{j-1}$
in our packed c-trie.
In addition, for any previous factor $g_{i}$~($1 \leq i < j$),
if there is no leaf or branching node which represents $g_{i}$,
then we add an internal non-branching node for $g_{i}$
into the packed c-trie.
We mark only and all nodes which represent previous factors.
To compute the $j$th factor $g_j = g_{j_1}g_{j_2}$,
we perform $\LPath(r, T_j)$ query 
where $r$ is the locus for the root and 
$T_j = T[1+\sum_{i}^{j-1}|g_i|,n]$.
Let $\hat{\phi}$ be the answer to the query.
Note that $\hat{\phi}$ can be deeper than the locus for $g_{j_1}$,
but it is always in the subtree rooted at $g_{j_1}$.
Hence, the nearest marked ancestor (NMA) of $\hat{\phi}$ is $g_{j_1}$.
We can compute $g_{j_2}$ similarly.
After we computed $g_j$,
we perform $\Insert(r, g_{j})$ operation and then 
mark the node which represents $g_{j}$.

The depth of the locus $\hat{\phi}$ 
is bounded by the length $M$ of the longest factor.
Hence we can reach the locus $\hat{\phi}$ in $O(\frac{M}{\alpha} + f(k, n))$ 
expected time using our packed c-trie.
We repeat the above procedure $k$ times.
Using the semi-dynamic NMA data structure of Westbrook~\cite{westbrook92:_fast_increm_planar_testin} that
supports NMA queries, inserting new nodes,
and marking unmarked nodes in amortized $O(1)$ time each,
we obtain the desired bound.
\end{proof}

\end{document}